\newtheorem{theorem}{Theorem}
\newtheorem{proposition}[theorem]{Proposition}
\newenvironment{proof}[1][Proof]{\noindent\textbf{#1.} }{\ \rule{0.5em}{0.5em}}
\begin{document}
\preprint{CTP-SCU/2017002}
\title{WKB Approximation for a Deformed Schrodinger-like Equation and its
Applications to Quasinormal Modes of Black Holes and Quantum Cosmology}
\author{Bochen Lv}
\email{bochennn@yahoo.com}
\author{Peng Wang}
\email{pengw@scu.edu.cn}
\author{Haitang Yang}
\email{hyanga@scu.edu.cn}
\affiliation{Center for Theoretical Physics, College of Physical Science and Technology,
Sichuan University, Chengdu, 610064, China}

\begin{abstract}
In this paper, we use the WKB approximation method to approximately solve a
deformed Schrodinger-like differential equation: $\left[  -\hbar^{2}%
\partial_{\xi}^{2}g^{2}\left(  -i\hbar\alpha\partial_{\xi}\right)
-p^{2}\left(  \xi\right)  \right]  \psi\left(  \xi\right)  =0$, which are
frequently dealt with in various effective models of quantum gravity, where
the parameter $\alpha$ characterizes effects of quantum gravity. For an
arbitrary function $g\left(  x\right)  $ satisfying several properties
proposed in the paper, we find the WKB solutions, the WKB connection formulas
through a turning point, the deformed Bohr--Sommerfeld quantization rule, and
the deformed tunneling rate formula through a potential barrier. Several
examples of applying the WKB approximation to the deformed quantum mechanics
are investigated. In particular, we calculate the bound states of the
P\"{o}schl-Teller potential and estimate the effects of quantum gravity on the
quasinormal modes of a Schwarzschild black hole. Moreover, the area quantum of
the black hole is considered via Bohr's correspondence principle. Finally, the
WKB solutions of the deformed Wheeler--DeWitt equation for a closed Friedmann
universe with a scalar field are obtained, and the effects of quantum gravity
on the probability of sufficient inflation is discussed in the context of the
tunneling proposal.

\end{abstract}
\keywords{}\maketitle
\tableofcontents



\section{Introduction}

The WKB approximation, named after Wentzel, Kramers, and Brillouin, is a
method for obtaining an approximate solution to a one-dimensional
Schrodinger-like differential equation:%
\begin{equation}
\left[  -\hbar^{2}\partial_{\xi}^{2}-p^{2}\left(  \xi\right)  \right]
\psi\left(  \xi\right)  =0,
\end{equation}
where the real function $p^{2}\left(  \xi\right)  $ can be either positively
or negatively valued. The WKB approximation has a wide range of applications.
Its principal applications are in calculating bound-state energies and
tunneling rates through potential barriers.

On the other hand, the construction of a quantum theory for gravity has posed
one of the most challenging problems of the theoretical physics. Although
there are various proposals for quantum gravity, a comprehensive theory is not
available yet. Rather than considering a full quantum theory of gravity, we
can instead study effective theories of quantum gravity. In various effective
models of quantum gravity, one always deals with a deformed Schrodinger-like
equation:%
\begin{equation}
\left[  P^{2}\left(  -i\hbar\partial_{\xi}\right)  -p^{2}\left(  \xi\right)
\right]  \psi\left(  \xi\right)  =0, \label{eq:DeformedEq}%
\end{equation}
where $P\left(  x\right)  =xg\left(  \alpha x\right)  $. The properties of the
function $g\left(  x\right)  $ will be discussed in section \ref{Sec:WKB}.
Note that the parameter $\alpha$ characterizes effects of quantum gravity. For
example, the deformed Schrodinger-like equation $\left(  \ref{eq:DeformedEq}%
\right)  $ could appear in two effective models, namely the Generalized
Uncertainty Principle (GUP) and the modified dispersion relation (MDR). We
will briefly show that how it appears in these two models in section
\ref{Sec:Examples}.

The WKB approximation in deformed space and its applications have been
considered in effective models of quantum gravity. For example, in the
framework of GUP, the WKB wave functions were obtained in
\cite{IN-Fityo:2005lwa}. Moreover, the deformed Bohr--Sommerfeld quantization
rule and tunneling rate formula were used to calculate bound states of
Harmonic oscillators and Hydrogen atoms \cite{IN-Fityo:2005lwa}, $\alpha
$-decay \cite{IN-Blado:2015ada,IN-Guo:2016btf}, quantum cosmogenesis
\cite{IN-Blado:2015ada}, the volume of a phase cell \cite{IN-Tao:2012fp}, and
electron emissions \cite{IN-Guo:2016btf}, for some specific function $g\left(
x\right)  $. In the context of both GUP and MDR, the deformed Bohr--Sommerfeld
quantization rule was used to compute the number of quantum states to find the
entanglement entropy of black holes in the brick wall
model\ \cite{IN-Wang:2015bwa,IN-Eune:2010kx,IN-Wang:2015zpa}. In
\cite{IN-Tao:2012fp,IN-Guo:2016btf}, we found the WKB connection formulas and
proved the deformed Bohr--Sommerfeld quantization rule and tunneling rate
formula for the $g\left(  x\right)  =\sqrt{1+x^{2}}$ case. In this paper, we
will consider the case with an arbitrary function $g\left(  x\right)  $, for
which the WKB connection formulas, Bohr--Sommerfeld quantization rule and
tunneling rate formula are obtained.

The organization of this paper is as follows. In section \ref{Sec:WKB}, the
deformed Schrodinger-like differential equation $\left(  \ref{eq:DeformedEq}%
\right)  $ are first approximately solved by the WKB method. After the
asymptotic behavior of exact solutions of eqn. $\left(  \ref{eq:DeformedEq}%
\right)  $ around turning points are found, we obtain the WKB connection
formulas through a turning point by matching these two solutions in the
overlap regions. Accordingly, the Bohr--Sommerfeld quantization rule and
tunneling rate formula are also given. In section \ref{Sec:Examples}, the
formulas obtained in section \ref{Sec:WKB} are used to investigate several
examples, namely harmonic oscillators, the Schwinger effect, the
P\"{o}schl-Teller potential, and quantum cosmology. Section \ref{Sec:Con} is
devoted to our discussion and conclusion. In the appendix, we plot the
contours used to compute the asymptotic behavior in the $g\left(  x\right)
=\frac{\tan\left(  x\right)  }{x}$ case.

\section{WKB Method}

\label{Sec:WKB}

We now apply the WKB method to approximately solve the deformed
Schrodinger-like differential equation $\left(  \ref{eq:DeformedEq}\right)  $.
In what follows, we choose that $\arg p\left(  \xi\right)  =0$ for
$p^{2}\left(  \xi\right)  >0$ and $\arg p\left(  \xi\right)  =\frac{\pi}{2}$
for $p^{2}\left(  \xi\right)  <0$. Moreover, we could rewrite $P\left(
x\right)  $ in terms of a new function $g\left(  x\right)  $ as
\begin{equation}
P\left(  x\right)  =xg\left(  \alpha x\right)  \text{.}%
\end{equation}
To study the WKB solutions of eqn. $\left(  \ref{eq:DeformedEq}\right)  $ and
the connection formulas through a turning point, we shall impose the following
conditions on the function $g\left(  x\right)  $:

\begin{itemize}
\item In the complex plane, $g\left(  z\right)  $ is assumed to be analytic
except for possible poles. We assume that $g\left(  0\right)  =1$.

\item For a positive real number $a>0$, each of the equations%
\begin{equation}
sg\left(  -ias\right)  =e^{i\pi k/2}\text{, with }k=0\text{, }1\text{,
}2\text{, }3, \label{eq:lamda}%
\end{equation}
possess only one regular solution $\lambda_{k}\left(  a\right)  $, which is
regular as $a\rightarrow0$ and becomes%
\begin{equation}
\lambda_{k}\left(  0\right)  =e^{i\pi k/2}\text{,}%
\end{equation}
and the possible runaway solutions $\eta_{k}^{\left(  i\right)  }\left(
a\right)  $, which becomes%
\begin{equation}
\eta_{k}^{\left(  i\right)  }\left(  a\right)  \sim\frac{\eta_{k}^{\left(
i\right)  }}{a},
\end{equation}
when $a\ll1$. We also assume that for small enough value of $a$, there exists
a $c_{1}>0$ such that for all possible $i$ and $k$,
\begin{equation}
\frac{c_{1}}{a}<\left\vert \eta_{k}^{\left(  i\right)  }\left(  a\right)
\right\vert \text{.}%
\end{equation}
If there is no runaway solution, we simply set $c_{1}=\infty$.

\item Finally, we assume that there exists a $c_{2}>0$ such that%
\begin{equation}
\left\vert g^{2}\left(  -is\right)  -1\right\vert \leq\frac{1}{2}\text{ for
}\left\vert s\right\vert \leq c_{2}.
\end{equation}

\end{itemize}

For example, $g\left(  x\right)  =1\pm x^{2}$ satisfies the above conditions
with $\exists0<c_{1}<1$ and $0<c_{2}\leq1/\sqrt{2}$. The function $g\left(
x\right)  =\frac{\tan x}{x}$ also satisfies the above conditions with
$\exists0<c_{1}<\pi$ and $0<c_{2}\leq\arctan\left(  3/2\right)  $.

\subsection{WKB Solutions}

To find an approximate solution via the WKB method, we could make the change
of variable%
\begin{equation}
\psi\left(  \xi\right)  =e^{\frac{iS\left(  \xi\right)  }{\hbar}}%
\end{equation}
for some function $S\left(  \xi\right)  $, which can be expanded in power
series over $\hbar$%
\begin{equation}
S\left(  \xi\right)  =S_{0}\left(  \xi\right)  +\frac{\hbar}{i}S_{1}\left(
\xi\right)  +\cdots. \label{eq:S}%
\end{equation}
Plugging eqn. $\left(  \ref{eq:S}\right)  $ into eqn. $\left(
\ref{eq:DeformedEq}\right)  $ gives
\begin{align}
P^{2}\left(  S_{0}^{\prime}\left(  \xi\right)  \right)   &  =p^{2}\left(
\xi\right)  ,\nonumber\\
\left[  P^{2}\left(  S_{0}^{\prime}\left(  \xi\right)  \right)  \right]
^{\prime}\frac{\hbar}{i}S_{1}\left(  \xi\right)   &  =\frac{i\hbar}{2}\left[
P^{2}\left(  S_{0}^{\prime}\left(  \xi\right)  \right)  \right]
^{\prime\prime}S_{0}^{\prime\prime}\left(  \xi\right)  , \label{eq:sexpand}%
\end{align}
where the prime denotes derivative with respect to the argument of the
corresponding function. The first equation in eqn. $\left(  \ref{eq:sexpand}%
\right)  $ can be solved for $S_{0}^{\prime}\left(  \xi\right)  $. In
particular, when $p^{2}\left(  \xi\right)  >0$,
\begin{equation}
S_{0}^{\prime}\left(  \xi\right)  =-i\left\vert p\left(  \xi\right)
\right\vert \lambda_{k}\left(  \alpha\left\vert p\left(  \xi\right)
\right\vert \right)  \text{, with }k=1\text{ and }3\text{,} \label{eq:k=1and3}%
\end{equation}
and when $p^{2}\left(  \xi\right)  <0$,%
\begin{equation}
S_{0}^{\prime}\left(  \xi\right)  =-i\left\vert p\left(  \xi\right)
\right\vert \lambda_{k}\left(  \alpha\left\vert p\left(  \xi\right)
\right\vert \right)  \text{, with }k=0\text{ and }2\text{,} \label{eq:k=0and2}%
\end{equation}
where $\lambda_{k}\left(  a\right)  $ are regular solutions of eqn. $\left(
\ref{eq:lamda}\right)  $. It is noteworthy that there are other possible
solutions, namely%
\begin{equation}
S_{0}^{\prime}\left(  \xi\right)  =-i\left\vert p\left(  \xi\right)
\right\vert \eta_{k}\left(  \alpha\left\vert p\left(  \xi\right)  \right\vert
\right)  .
\end{equation}
These solutions are called "runaways" solutions since they do not exist in the
limit of $\alpha\rightarrow0$. In \cite{WKB-Simon:1990ic}, it was argued that
these "runaways" solutions were not physical and hence should be discarded. A
similar argument was also given in the framework of the GUP
\cite{WKB-Ching:2012fu}. Therefore, we will discard the "runaways" solutions
and keep only the solutions $\left(  \ref{eq:k=1and3}\right)  $ and $\left(
\ref{eq:k=0and2}\right)  $ in this paper. Solving the second equation in eqn.
$\left(  \ref{eq:sexpand}\right)  $ gives%
\begin{equation}
S_{1}\left(  x\right)  =-\frac{1}{2}\ln\left\vert \left[  P^{2}\left(
x\right)  \right]  ^{\prime}|_{x=S_{0}^{\prime}\left(  \xi\right)
}\right\vert .
\end{equation}
The expression for the WKB solutions are%
\begin{equation}
\psi_{WKB}\left(  \xi\right)  =C_{1}\psi_{WKB}^{1}\left(  \xi\right)
+C_{3}\psi_{WKB}^{3}\left(  \xi\right)  \text{ for }p^{2}\left(  \xi\right)
>0 \label{eq:wkbp>0}%
\end{equation}
and
\begin{equation}
\psi_{WKB}\left(  \xi\right)  =C_{0}\psi_{WKB}^{0}\left(  \xi\right)
+C_{2}\psi_{WKB}^{2}\left(  \xi\right)  \text{ for }p^{2}\left(  \xi\right)
<0, \label{eq:wkbp<0}%
\end{equation}
where $C_{i}$ are constants, and we define%
\begin{equation}
\psi_{WKB}^{k}\left(  \xi\right)  =\frac{1}{\sqrt{\left\vert \left[
x^{2}g^{2}\left(  \alpha x\right)  \right]  ^{\prime}|_{x=-i\left\vert
p\left(  \xi\right)  \right\vert \lambda_{k}\left(  \alpha\left\vert p\left(
\xi\right)  \right\vert \right)  }\right\vert }}\exp\left(  \frac{1}{\hbar
}\int\left\vert p\left(  \xi\right)  \right\vert \lambda_{k}\left(
\alpha\left\vert p\left(  \xi\right)  \right\vert \right)  d\xi\right)  .
\end{equation}
These WKB solutions are valid if the RHS of the second equation in eqn.
$\left(  \ref{eq:sexpand}\right)  $ is much less than that of the first one.
Specifically, they are valid when%
\begin{equation}
\left\vert p^{2}\left(  \xi\right)  \right\vert \gg\frac{\hbar}{2}\left\vert
\left[  P^{2}\left(  S_{0}^{\prime}\left(  \xi\right)  \right)  \right]
^{\prime\prime}S_{0}^{\prime\prime}\left(  \xi\right)  \right\vert .
\label{eq:Condition}%
\end{equation}
However, the condition $\left(  \ref{eq:Condition}\right)  $ fails near a
turning point where $P\left(  x\right)  =0$. In the following of this section,
we will derived WKB connection formulas through the turning points.

\subsection{Connection Formulas}

We first investigate the asymptotic behavior of solutions of the differential
equation%
\begin{equation}
\partial_{\rho}^{2}g^{2}\left(  -i\tilde{\alpha}\partial_{\rho}\right)
\psi-\rho\psi=0, \label{eq:linearDE}%
\end{equation}
where $\tilde{\alpha}>0$. To solve this equation, it is useful to Laplace
transform it via%
\begin{equation}
\psi\left(  \rho\right)  =\int\nolimits_{C}e^{\rho t}\tilde{\psi}\left(
t\right)  dt, \label{eq:laplaceT}%
\end{equation}
where the contour $C$ in the complex plane will be discussed below. The
equation for $\tilde{\psi}\left(  t\right)  $ in terms of the complex variable
$t$ reads%
\begin{equation}
\frac{d\tilde{\psi}\left(  t\right)  }{dt}+t^{2}g^{2}\left(  -i\tilde{\alpha
}t\right)  \tilde{\psi}\left(  t\right)  =0, \label{eq:phitilde}%
\end{equation}
where we use the integration by parts to obtain the second term. Note the
integration by parts used in eqn. $\left(  \ref{eq:phitilde}\right)  $
requires that $e^{\rho t}\tilde{\psi}\left(  t\right)  $ vanishes at endpoints
of $C$. Up to an irrelevant pre-factor, its solution is%
\begin{equation}
\tilde{\psi}\left(  t\right)  =\exp\left(  -\int_{0}^{t}t^{\prime2}%
g^{2}\left(  -i\tilde{\alpha}t^{\prime}\right)  dt^{\prime}\right)  .
\end{equation}
To apply the saddle point method, we make the change of variables
$t=\left\vert \rho\right\vert ^{\frac{1}{2}}s$ and rewrite the Laplace
transformation in eqn. $\left(  \ref{eq:laplaceT}\right)  $ as%
\begin{equation}
\psi\left(  \rho\right)  =\left\vert \rho\right\vert ^{\frac{1}{2}}\int
_{C}\exp\left[  \left\vert \rho\right\vert ^{\frac{3}{2}}f_{\pm}\left(
s\right)  \right]  ds, \label{eq:interalsolution}%
\end{equation}
where we define $a=\tilde{\alpha}\left\vert \rho\right\vert ^{1/2}$ and%
\begin{equation}
f_{\pm}\left(  s\right)  =\pm s-\int_{0}^{s}s^{\prime2}g^{2}\left(
-ias^{\prime}\right)  ds^{\prime},
\end{equation}
with $+$ for $\rho>0$ and $-$ for $\rho<0$. The contour $C$ in eqn. $\left(
\ref{eq:interalsolution}\right)  $ is chosen so that the integrand vanishes at
endpoints of $C$.

Now consider a large circle $C_{R}$ of radius $R=\frac{c}{a}$, where
$c=\min\left\{  c_{1}\text{, }c_{2}\right\}  $. The saddle points of
$f_{+}\left(  s\right)  $ $\left(  f_{-}\left(  s\right)  \right)  $ are
$\lambda_{k}\left(  a\right)  $ and $\eta_{k}^{\left(  i\right)  }\left(
a\right)  $ with $k=0$ and $2$ $\left(  1\text{ and }3\right)  $. Thus, all
the saddle points except $\lambda_{k}\left(  a\right)  $ are outside the
circle $C_{R}$. To discuss the properties of the steepest descent contours
passing through $\lambda_{k}\left(  a\right)  $, we first prove two
propositions. In the following, let $C_{\lambda_{k}}$ denote the steepest
descent contours passing through $\lambda_{k}\left(  a\right)  $.

\begin{proposition}
For small $a$, if $C_{\lambda_{k}}$ intersects $C_{R}$ at $Re^{i\theta_{\ast}%
}$, then there exists an $n\in\left\{  0,1,2\right\}  $ such that
\begin{equation}
\left\vert \theta^{\ast}-\frac{2n\pi}{3}\right\vert \leq\frac{\pi}%
{18}+\mathcal{O}\left(  a\right)  .
\end{equation}
Moreover, one finds
\begin{equation}
\operatorname{Re}f_{\pm}\left(  Re^{i\theta_{\ast}}\right)  \lesssim
-\frac{c^{3}}{6a^{3}}+\mathcal{O}\left(  a^{-2}\right)  .
\end{equation}
\label{Pop1}
\end{proposition}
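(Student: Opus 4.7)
The plan is to treat $f_{\pm}$ as a perturbation of the cubic $\pm s-s^{3}/3$ and then combine the two defining properties of the steepest descent contour: constancy of $\operatorname{Im}f_{\pm}$ and descent of $\operatorname{Re}f_{\pm}$. First I would write
\begin{equation*}
f_{\pm}(s)=\pm s-\frac{s^{3}}{3}-E(s),\qquad E(s):=\int_{0}^{s}s'^{2}\bigl[g^{2}(-ias')-1\bigr]\,ds'.
\end{equation*}
Since $c=\min\{c_{1},c_{2}\}\leq c_{2}$, on the disk $|s|\leq R:=c/a$ one has $|ias'|\leq c\leq c_{2}$ along the straight segment from $0$ to $s$, so the third hypothesis on $g$ gives $|g^{2}(-ias')-1|\leq 1/2$, whence $|E(s)|\leq|s|^{3}/6$; in particular $|E|\leq R^{3}/6=c^{3}/(6a^{3})$ on $C_{R}$.

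For the angular localization I would exploit that along $C_{\lambda_{k}}$ the contour satisfies $\operatorname{Im}f_{\pm}(Re^{i\theta_{\ast}})=\operatorname{Im}f_{\pm}(\lambda_{k})=\mathcal{O}(1)$, the last equality because $\lambda_{k}(a)=e^{i\pi k/2}+\mathcal{O}(a)$. Writing the left-hand side as $\pm R\sin\theta_{\ast}-(R^{3}/3)\sin(3\theta_{\ast})-\operatorname{Im}E$ and using $R=c/a$ together with $|\operatorname{Im}E|\leq R^{3}/6$, one extracts $|\sin(3\theta_{\ast})|\leq 1/2+\mathcal{O}(a^{2})$. Of the values of $3\theta_{\ast}\bmod 2\pi$ compatible with this inequality, the requirement that $\operatorname{Re}f_{\pm}$ descend from its $\mathcal{O}(1)$ saddle value to a very negative value at the circle selects a valley where $\cos(3\theta_{\ast})>0$, pinning $3\theta_{\ast}$ near an even multiple of $\pi$ and yielding $|\theta_{\ast}-2n\pi/3|\leq\pi/18+\mathcal{O}(a)$ for some $n\in\{0,1,2\}$.

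For the value bound I would expand
\begin{equation*}
\operatorname{Re}f_{\pm}(Re^{i\theta_{\ast}})=\pm R\cos\theta_{\ast}-\tfrac{R^{3}}{3}\cos(3\theta_{\ast})-\operatorname{Re}E.
\end{equation*}
The delicate point, which I expect to be the main obstacle, is that $\operatorname{Re}E$ and $\operatorname{Im}E$ are not independent: they are tied by the single pointwise bound $(\operatorname{Re}E)^{2}+(\operatorname{Im}E)^{2}\leq(R^{3}/6)^{2}$, while $\operatorname{Im}E$ is exactly what pins down $\theta_{\ast}$ through the constant-$\operatorname{Im}f$ equation. Maximising the right-hand side above over this joint constraint, the worst case turns out to be $\operatorname{Im}E=0$ (giving $\cos(3\theta_{\ast})=1$) combined with $\operatorname{Re}E=-R^{3}/6$, which produces $\operatorname{Re}f_{\pm}\leq -R^{3}/3+R^{3}/6+R=-c^{3}/(6a^{3})+\mathcal{O}(a^{-1})$, comfortably inside the stated $\mathcal{O}(a^{-2})$. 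The residual $\mathcal{O}(a)$ corrections arising from the displacement $\lambda_{k}-e^{i\pi k/2}$ and from the small imaginary part of $f_{\pm}(\lambda_{k})$ are routine bookkeeping.
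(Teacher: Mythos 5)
Your proposal is correct and follows essentially the same route as the paper's proof: the paper writes the perturbation of the cubic multiplicatively as $\rho(s)e^{i\sigma(s)}$ with $\tfrac12\le\rho\le\tfrac32$ and $|\sigma|\le\pi/6$, where you write it additively as $E(s)$ with $|E(s)|\le|s|^{3}/6$, but the three steps are identical — the constant-$\operatorname{Im}$ condition localizes $3\theta_{\ast}$ near multiples of $\pi$, the descent property of $C_{\lambda_{k}}$ excludes the odd multiples, and the half-size bound on the perturbation (your joint maximization over $\operatorname{Re}E,\operatorname{Im}E$, which plays the role of the paper's $\rho\ge\tfrac12$) gives the $-c^{3}/(6a^{3})$ value bound. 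Your bookkeeping even yields a slightly sharper error term, $\mathcal{O}(a^{-1})$ versus the paper's stated $\mathcal{O}(a^{-2})$.
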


\begin{proof}
For $f_{\pm}\left(  s\right)  $, we have%
\begin{equation}
f_{\pm}\left(  s\right)  =\pm s-\frac{\left\vert s\right\vert ^{3}\rho\left(
s\right)  e^{i\left[  3\theta+\sigma\left(  s\right)  \right]  }}{3},
\end{equation}
where
\begin{align}
s  &  =\left\vert s\right\vert e^{i\theta},\nonumber\\
f\left(  s\right)  e^{i\alpha\left(  s\right)  }  &  =\frac{3}{a^{3}s^{3}}%
\int_{0}^{as}x^{2}\left[  g^{2}\left(  -ix\right)  -1\right]  dx,\\
\rho\left(  s\right)  e^{i\sigma\left(  s\right)  }  &  =1+f\left(  s\right)
e^{i\alpha\left(  s\right)  }.\nonumber
\end{align}
Since $\left\vert g^{2}\left(  -ix\right)  -1\right\vert \leq\frac{1}{2}$ for
$\left\vert x\right\vert \leq aR$, one finds for $\left\vert s\right\vert \leq
R$ that
\begin{equation}
f\left(  s\right)  \leq\frac{3}{\left\vert as\right\vert ^{3}}\int
_{0}^{\left\vert as\right\vert }\left\vert x^{2}\right\vert \left\vert
g^{2}\left(  -ix\right)  -1\right\vert d\left\vert x\right\vert \leq\frac
{1}{2},
\end{equation}
and hence%
\[
\frac{1}{2}\leq1-f\left(  s\right)  \leq\rho\left(  s\right)  \leq1+f\left(
s\right)  \leq\frac{3}{2},
\]%
\begin{equation}
\left\vert \sin\sigma\left(  s\right)  \right\vert \leq f\left(  s\right)
\Rightarrow\left\vert \sigma\left(  s\right)  \right\vert \leq\arcsin f\left(
s\right)  \leq\arcsin\frac{1}{2}=\frac{\pi}{6}.
\end{equation}

Suppose that the contour $C_{\lambda_{k}}$ intersects $C_{R}$ at $s_{\ast
}=Re^{i\theta_{\ast}}$. Since $C_{\lambda_{k}}$ is also a constant-phase
contour, $C_{\lambda_{k}}$ is determined by%
\begin{equation}
\operatorname{Im}f_{\pm}\left(  s\right)  =\operatorname{Im}f_{\pm}\left(
\lambda_{k}\left(  a\right)  \right)  .
\end{equation}
At $s=s_{\ast}$, this equation becomes%
\begin{equation}
\pm ca^{2}\sin\theta^{\ast}-c^{3}\frac{\rho\left(  s\right)  \sin\left[
3\theta^{\ast}+\sigma\left(  s^{\ast}\right)  \right]  }{3}=a^{3}%
\operatorname{Im}f_{\pm}\left(  \lambda_{k}\left(  a\right)  \right)  ,
\end{equation}
where we use $R=\frac{c}{a}$. For small $a$, one has $\operatorname{Im}f_{\pm
}\left(  \lambda_{k}\left(  a\right)  \right)  \sim\mathcal{O}\left(
a\right)  $ and hence%
\begin{equation}
\theta^{\ast}+\frac{\sigma\left(  s^{\ast}\right)  }{3}=\frac{2n\pi}%
{3}+\mathcal{O}\left(  a\right)  \text{ or }\frac{\left(  2n+1\right)  \pi}%
{3}+\mathcal{O}\left(  a\right)  \text{,} \label{eq:theta}%
\end{equation}
where $n\in\left\{  0,1,2\right\}  $. However for $\theta^{\ast}+\frac
{\sigma\left(  s^{\ast}\right)  }{3}=\frac{\left(  2n+1\right)  \pi}%
{3}+\mathcal{O}\left(  a\right)  $, we find at $s=s_{\ast}$ that%
\begin{equation}
\operatorname{Re}f_{\pm}\left(  s_{\ast}\right)  \sim\frac{\rho\left(
s_{\ast}\right)  c^{3}}{3a^{3}}\gg\operatorname{Re}f_{\pm}\left(  \lambda
_{k}\left(  a\right)  \right)  \sim\mathcal{O}\left(  a\right)  ,
\end{equation}
which contradicts $C_{\lambda_{k}}$ being the steepest descent contour. Thus,
for the steepest descent contour $C_{\lambda_{k}}$, eqn. $\left(
\ref{eq:theta}\right)  $ gives for some $n\in\left\{  0,1,2\right\}  $ that%
\[
\left\vert \theta^{\ast}-\frac{2n\pi}{3}\right\vert =\frac{\left\vert
\sigma\left(  s^{\ast}\right)  \right\vert }{3}+\mathcal{O}\left(  a\right)
\leq\frac{\pi}{18}+\mathcal{O}\left(  a\right)  .
\]
It can be easily shown that
\begin{equation}
\operatorname{Re}f_{\pm}\left(  s_{\ast}\right)  =\frac{\rho\left(  s_{\ast
}\right)  c^{3}}{3a^{3}}+\mathcal{O}\left(  a^{-2}\right)  \leq-\frac{c^{3}%
}{6a^{3}}+\mathcal{O}\left(  a^{-2}\right)  ,
\end{equation}
where we use $\rho\left(  s_{\ast}\right)  \geq\frac{1}{2}$.
\end{proof}

\begin{proposition}
On the circle $C_{R}$, $\operatorname{Re}f_{\pm}\left(  Re^{i\theta}\right)
\leq-\frac{c^{3}}{12a^{3}}+\mathcal{O}\left(  a^{-2}\right)  $ for $\left\vert
\theta-\frac{2n\pi}{3}\right\vert \leq\frac{\pi}{18}+\mathcal{O}\left(
a\right)  $, where $n\in\left\{  0,1,2\right\}  $. \label{Pop2}
\end{proposition}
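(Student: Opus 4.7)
The plan is to carry over the machinery already developed in the proof of Proposition~\ref{Pop1} and just track how the bound on $\operatorname{Re}f_{\pm}$ degrades as $\theta$ is allowed to deviate from a saddle direction $2n\pi/3$. Nothing new needs to be invoked about $g$.

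First I would write, for $s = Re^{i\theta}$ with $R=c/a$,
\begin{equation*}
\operatorname{Re}f_{\pm}\!\left(Re^{i\theta}\right)
= \pm R\cos\theta \;-\; \frac{R^{3}\rho(s)\cos\!\bigl[3\theta+\sigma(s)\bigr]}{3},
\end{equation*}
using the decomposition of $f_{\pm}(s)$ introduced in the proof of Proposition~\ref{Pop1}. The first term is $\pm(c/a)\cos\theta = \mathcal{O}(a^{-1})$, hence also $\mathcal{O}(a^{-2})$, so it is absorbed into the error. Everything then reduces to bounding the second term from above.

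Next I would estimate the angular factor. From the bound $\lvert g^{2}(-ix)-1\rvert\le 1/2$ on $|x|\le aR=c$ already used in Proposition~\ref{Pop1}, one has $\rho(s)\ge 1/2$ and $|\sigma(s)|\le \pi/6$ on the circle $C_{R}$. For $\theta$ in the asserted range, $|3\theta-2n\pi|\le \pi/6+\mathcal{O}(a)$, so by the triangle inequality
\begin{equation*}
\bigl|3\theta+\sigma(s)-2n\pi\bigr| \;\le\; \frac{\pi}{6} + \frac{\pi}{6} + \mathcal{O}(a) \;=\; \frac{\pi}{3} + \mathcal{O}(a),
\end{equation*}
and hence $\cos[3\theta+\sigma(s)]\ge \cos(\pi/3)+\mathcal{O}(a) = 1/2 + \mathcal{O}(a)$.

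Finally, combining $\rho(s)\ge 1/2$ with this lower bound on the cosine yields
\begin{equation*}
-\,\frac{R^{3}\rho(s)\cos[3\theta+\sigma(s)]}{3}
\;\le\; -\,\frac{c^{3}}{3a^{3}}\cdot\frac{1}{2}\cdot\frac{1}{2}
\;=\; -\,\frac{c^{3}}{12\,a^{3}},
\end{equation*}
up to an $\mathcal{O}(a^{-2})$ correction coming both from the $\mathcal{O}(a)$ slack in the angle bound and from the $\pm R\cos\theta$ linear term, which proves the claim. There is no real obstacle here: the only subtle point is the worst-case accounting of angles, where one must add the $\pi/18$ half-width in $\theta$ (giving $\pi/6$ in $3\theta$) to the $\pi/6$ fluctuation in $\sigma(s)$, so that the resulting angular deviation is exactly at the threshold $\pi/3$ where $\cos$ still provides a definite positive gap of $1/2$—this is precisely what halves the constant from $c^{3}/(6a^{3})$ in Proposition~\ref{Pop1} to $c^{3}/(12a^{3})$ here.
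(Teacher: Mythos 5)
Your proof is correct and follows essentially the same route as the paper: the same polar decomposition of $f_{\pm}$ from Proposition~\ref{Pop1}, the bounds $\rho(s)\geq\tfrac12$ and $\lvert\sigma(s)\rvert\leq\tfrac{\pi}{6}$ on $C_R$, the angular accounting giving $\cos[3\theta+\sigma(s)]\geq\tfrac12+\mathcal{O}(a)$, and absorption of the $\pm R\cos\theta$ term into the error. Your phrasing of the angle bookkeeping in terms of $3\theta$ rather than $\theta+\sigma/3$ is only a cosmetic difference, and your closing remark correctly identifies the cosine factor $\tfrac12$ as the source of the halving relative to Proposition~\ref{Pop1}.
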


\begin{proof}
Since $\left\vert \sigma\left(  s\right)  \right\vert \leq\frac{\pi}{6}$ on
the $C_{R}$, if $\left\vert \theta-\frac{2n\pi}{3}\right\vert \leq\frac{\pi
}{18}+\mathcal{O}\left(  a\right)  $, we have%
\begin{equation}
\left\vert \theta+\frac{\sigma\left(  s\right)  }{3}-\frac{2n\pi}%
{3}\right\vert \leq\frac{\pi}{9}+\mathcal{O}\left(  a\right)  ,
\end{equation}
which leads to%
\begin{equation}
\cos\left[  3\theta+\sigma\left(  s\right)  \right]  \geq\cos\left(  \frac
{\pi}{3}\right)  =\frac{1}{2}+\mathcal{O}\left(  a\right)  .
\end{equation}
Thus, it shows that on $\left\vert \theta-\frac{2n\pi}{3}\right\vert \leq
\frac{\pi}{18}+\mathcal{O}\left(  a\right)  $,
\begin{equation}
\operatorname{Re}f_{\pm}\left(  Re^{i\theta}\right)  =-\frac{R^{3}\rho\left(
Re^{i\theta}\right)  \cos\left[  3\theta+\sigma\left(  Re^{i\theta}\right)
\right]  }{3}+\mathcal{O}\left(  a^{-1}\right)  \leq-\frac{c^{3}}{12a^{3}%
}+\mathcal{O}\left(  a^{-2}\right)  ,
\end{equation}
where we use $\rho\left(  Re^{i\theta}\right)  \geq\frac{1}{2}$.
\end{proof}

\begin{figure}[tb]
\begin{centering}
\includegraphics[scale=0.5]{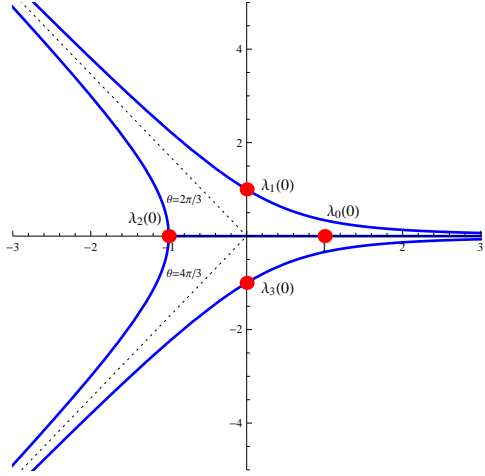}
\par\end{centering}
\caption{The saddle points (red dots) and steepest descent contours (blue
thick lines) of $f_{\pm}\left(  s\right)  $ when $a=0$.}%
\label{fig:a=0}%
\end{figure}

For $a=0$, we plot saddle points (red dots) of $f_{\pm}\left(  s\right)  $ and
the steepest descent contours (blue thick lines) passing through them in FIG.
$\ref{fig:a=0}$. When $a>0$, more possible saddle points and poles could
appear and these steepest descent contours could change dramatically around
them, e.g. a contour that goes to infinity in the case $a=0$ could change to
the one that ends at a new saddle point or a pole. However within $C_{R}$,
there are no new saddle points or poles, and hence $C_{\lambda_{k}}$ would
change continuously as $a$ is varied away from $0$. FIG. $\ref{fig:a=0}$ shows
that for $a=0$, $C_{\lambda_{2}}$ approaches $\theta=\frac{2\pi}{3}$ and
$\frac{4\pi}{3}$ for a large value of $\left\vert s\right\vert $. So when
$a>0$, $C_{\lambda_{2}}$ will intersect $C_{R}$ twice, and the intersections
$Re^{i\theta_{\ast}}$ are within $\left\vert \theta^{\ast}-\frac{2\pi}%
{3}\right\vert \leq\frac{\pi}{18}+\mathcal{O}\left(  a\right)  $ and
$\left\vert \theta^{\ast}-\frac{4\pi}{3}\right\vert \leq\frac{\pi}%
{18}+\mathcal{O}\left(  a\right)  $, respectively. Similarly, $C_{\lambda_{1}%
}$ intersects $C_{R}$ within $\left\vert \theta^{\ast}\right\vert \leq
\frac{\pi}{18}+\mathcal{O}\left(  a\right)  $ and $\left\vert \theta^{\ast
}-\frac{2\pi}{3}\right\vert \leq\frac{\pi}{18}+\mathcal{O}\left(  a\right)  $,
$C_{\lambda_{3}}$ intersects $C_{R}$ within $\left\vert \theta^{\ast
}\right\vert \leq\frac{\pi}{18}+\mathcal{O}\left(  a\right)  $ and $\left\vert
\theta^{\ast}-\frac{4\pi}{3}\right\vert \leq\frac{\pi}{18}+\mathcal{O}\left(
a\right)  $, and $C_{\lambda_{0}}$ ends at $\lambda_{2}\left(  a\right)  $ and
intersects $C_{R}$ within $\left\vert \theta^{\ast}\right\vert \leq\frac{\pi
}{18}+\mathcal{O}\left(  a\right)  $.

\begin{figure}[tb]
\begin{center}
\subfigure[{~Contours of $\psi_{1}\left(\rho\right)$. The contour used in the $\rho>0$ $\left(  \rho<0\right)  $ case is the one passing through the saddle point(s) $\lambda_{2}\left(  a\right)  $ $\left(\lambda_{1}\left(  a\right)  \text{ and }\lambda_{3}\left( a\right)  \right)$.  }]
{\includegraphics[width=0.45\textwidth]{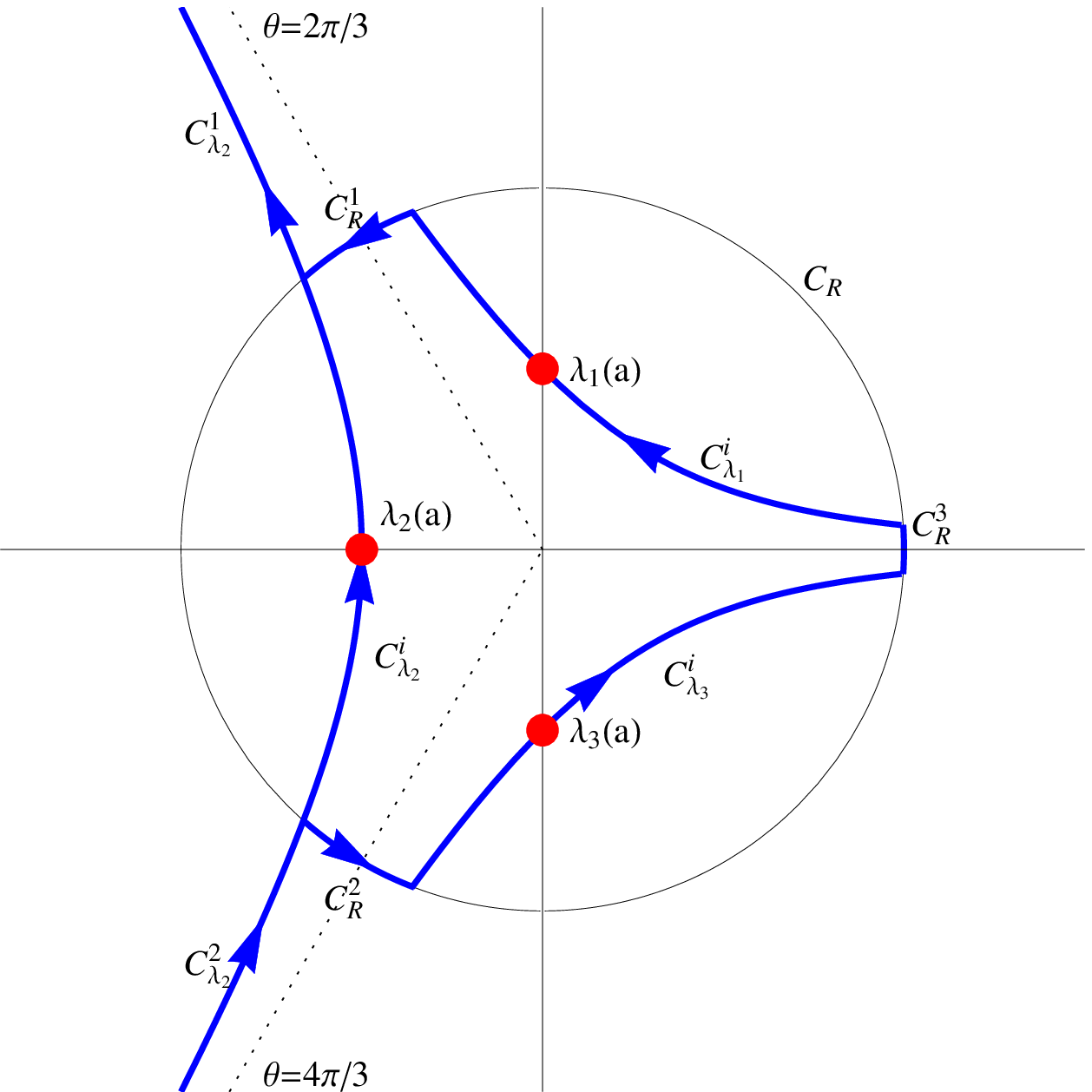}\label{fig:Generala:a}}
\subfigure[{~Contours of $\psi_{2}\left(  \rho\right)$. The contour used in the $\rho>0$ $\left(  \rho<0\right)  $ case is the one
passing through the saddle point(s) $\lambda_{0}\left(  a\right)  $ and
$\lambda_{2}\left(  a\right)  $ $\left(  \lambda_{3}\left(  a\right)  \right)
$. }]
{\includegraphics[width=0.45\textwidth]{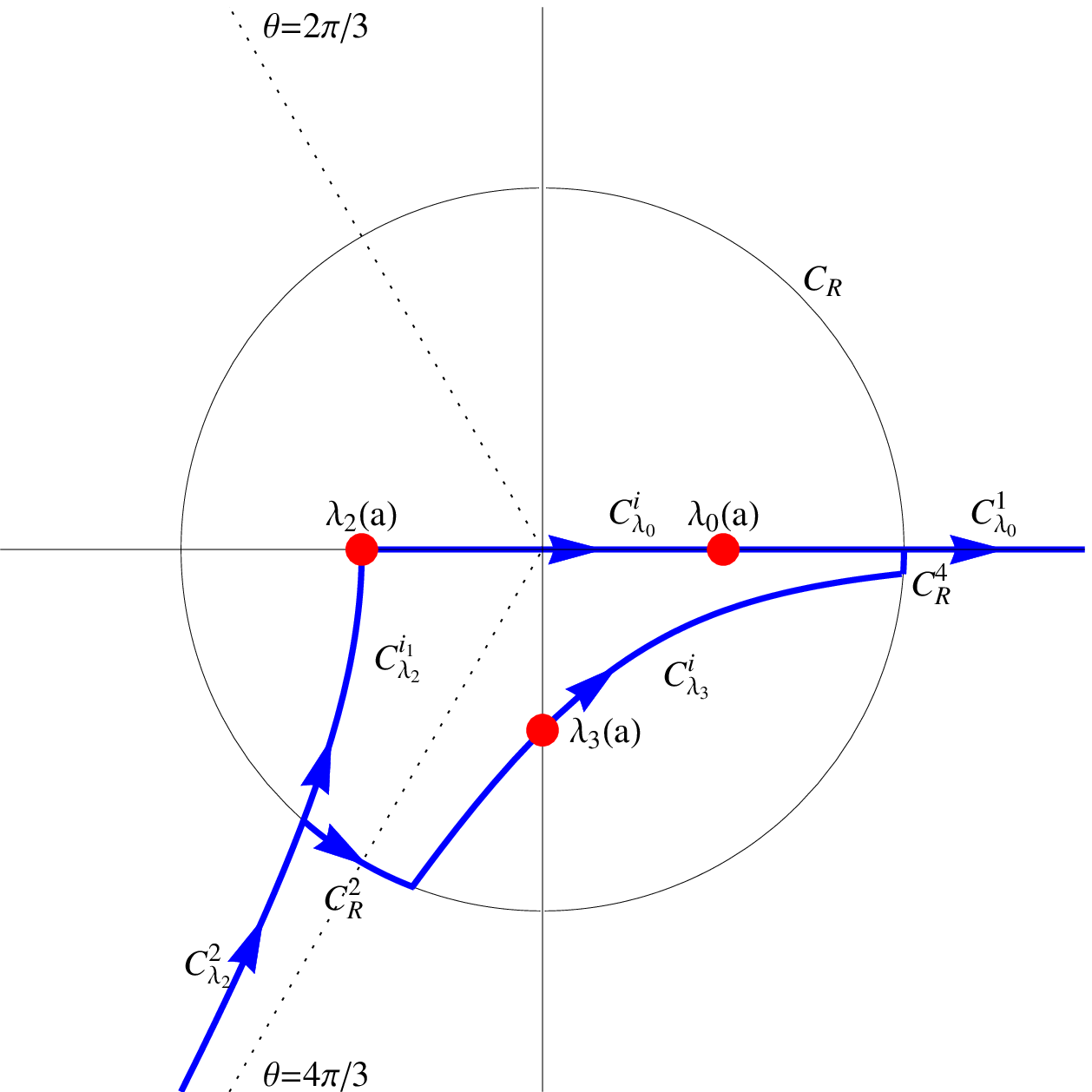}\label{fig:Generala:b}}
\end{center}
\caption{Contours (blue thick lines) and saddle points (red dots) of $\psi
_{1}\left(  \rho\right)  $ and $\psi_{2}\left(  \rho\right)  $ in the $\rho>0$
and $\rho<0$ cases.}%
\label{fig:Generala}%
\end{figure}

Since the saddle points and hence the solutions in eqn. $\left(
\ref{eq:interalsolution}\right)  $ depend on the sign of $\rho$, it is
convenient to choose different contours in the complex plane for $\rho>0$ and
$\rho<0$, making sure that they are deformable to each other. In FIG.
$\ref{fig:Generala:a}$, the contour considered in the $\rho>0$ case is the
steepest descent contour $C_{\lambda_{2}}$ through $\lambda_{2}\left(
a\right)  $. For simplicity, we assume $\arg\lambda_{k}\left(  a\right)
=k\pi/2$ to illustrate the contours in FIG. $\ref{fig:Generala}$. In FIG.
$\ref{fig:Generala}$, $C_{\lambda_{k}}^{i}$ denotes the part of $C_{\lambda
_{k}}$ inside the circle $C_{R}$ while $C_{\lambda_{k}}^{1,2}$ denotes the
parts of $C_{\lambda_{k}}$ outside $C_{R}$. Note that $f_{+}\left(  s\right)
\rightarrow-\infty$ when one moves away from the saddle point, and hence the
corresponding integrand in eqn. $\left(  \ref{eq:interalsolution}\right)  $
vanishes at endpoints of $C_{\lambda_{2}}$. Since $C_{\lambda_{2}}$ is a
steepest descent contour, when $1\ll\rho\ll\alpha^{-2}$, the dominant
contribution to the integral over $C_{\lambda_{2}}$ in eqn. $\left(
\ref{eq:interalsolution}\right)  $ comes from the neighborhood of the saddle
point $\lambda_{2}\left(  a\right)  $. Thus by the method of steepest descent,
one has for $1\ll\rho\ll\tilde{\alpha}^{-2}$ that
\begin{equation}
\psi_{1}\left(  \rho\right)  =\left\vert \rho\right\vert ^{\frac{1}{2}}%
\int_{C_{>}}\exp\left[  \left\vert \rho\right\vert ^{\frac{3}{2}}f_{+}\left(
s\right)  \right]  ds\sim I_{\lambda_{2}\left(  a\right)  }\text{,}
\label{eq:phi1plus}%
\end{equation}
where $C_{>}=C_{\lambda_{2}}$, and $I_{\lambda_{k}\left(  a\right)  }$ is the
contribution from the saddle point $\lambda_{k}\left(  a\right)  $. Using
Watson's lemma, we find
\begin{equation}
I_{\lambda_{k}\left(  a\right)  }\sim\frac{\sqrt{\pi}\exp\left[  \left\vert
\rho\right\vert ^{\frac{3}{2}}f_{\pm}\left(  \lambda_{k}\left(  a\right)
\right)  \right]  }{\left\vert \rho\right\vert ^{\frac{1}{4}}}\sqrt{\frac
{2}{\left[  s^{2}g^{2}\left(  -ias\right)  \right]  ^{\prime}|_{s=\lambda
_{k}\left(  a\right)  }}},
\end{equation}
where $k=0$ and $2$ is for $+$, and $k=1$ and $3$ for $-$. To study the
asymptotic behavior of $\psi_{1}\left(  \rho\right)  $ when $-1\gg\rho
\gg-\tilde{\alpha}^{-2}$, we consider
\begin{equation}
\psi_{1}\left(  \rho\right)  =\left\vert \rho\right\vert ^{\frac{1}{2}}%
\int_{C_{<}}\exp\left[  \left\vert \rho\right\vert ^{\frac{3}{2}}f_{-}\left(
s\right)  \right]  ds\text{,} \label{eq:phi1minus}%
\end{equation}
where the contour $C_{<}$ consists of $C_{\lambda_{2}}^{1}$, $C_{\lambda_{2}%
}^{2}$, $C_{\lambda_{1}}^{i}$, $C_{\lambda_{3}}^{i}$, $C_{R}^{1}$, $C_{R}^{2}%
$, and $C_{R}^{3}$, as shown in FIG. $\ref{fig:Generala:a}$. Since the
contributions from $C_{\lambda_{2}}^{1}$ and $C_{\lambda_{2}}^{2}$ are
neglected in eqn. $\left(  \ref{eq:phi1plus}\right)  $, they can also be
neglected in eqn. $\left(  \ref{eq:phi1minus}\right)  $. The contours
$C_{R}^{i}$ connect two adjacent steepest descent contour along the circle
$C_{R}$. Thus, propositions \ref{Pop1} and \ref{Pop2} implies that the
contributions from $C_{R}^{i}$ is
\begin{equation}
\left\vert \left\vert \rho\right\vert ^{\frac{1}{2}}\int_{C_{R}^{i}}%
\exp\left[  \left\vert \rho\right\vert ^{\frac{3}{2}}f_{-}\left(  s\right)
\right]  ds\right\vert \sim\frac{\pi c\left\vert \rho\right\vert ^{\frac{1}%
{2}}}{9a}\exp\left[  -\frac{c^{3}\left\vert \rho\right\vert ^{\frac{3}{2}}%
}{12a^{3}}\right]  ,
\end{equation}
where $i=1,2,3$. Since $\left\vert f_{-}\left(  s\right)  \right\vert
\sim\mathcal{O}\left(  1\right)  $ at $s=\lambda_{1}\left(  a\right)  $ and
$\lambda_{3}\left(  a\right)  $, the contributions from $C_{R}^{i}$ can also
be neglected. Thus\ considering the contributions from $C_{\lambda_{1}}^{i}$
and $C_{\lambda_{3}}^{i}$ around the saddle points $\lambda_{1}\left(
a\right)  $ and $\lambda_{3}\left(  a\right)  $, we find that when $-1\gg
\rho\gg-\tilde{\alpha}^{-2}$,
\begin{equation}
\psi_{1}\left(  \rho\right)  \sim I_{\lambda_{1}\left(  a\right)  }%
+I_{\lambda_{3}\left(  a\right)  }.
\end{equation}
Since there is no singularity inside $C_{R}$, the contour $C_{<}$ used in the
$\rho<0$ case can be deformed to $C_{>}$ in the $\rho>0$ case.

Similarly in FIG $\ref{fig:Generala:b}$ , we consider the contour
$C_{>}=C_{\lambda_{2}}^{2}+C_{\lambda_{2}}^{i_{1}}+C_{\lambda_{0}}%
^{i}+C_{\lambda_{0}}^{1}$ in the $\rho>0$ case and $C_{<}=C_{\lambda_{2}}%
^{2}+C_{R}^{2}+C_{\lambda_{3}}^{i}+C_{R}^{4}+C_{\lambda_{0}}^{1}$ in the
$\rho<0$ case. It is noteworthy that the contours $C_{>}$ and $C_{<}$ are
deformable to each other. As argued before, the contributions from $C_{R}^{2}$
and $C_{R}^{4}$ can be neglected. Since the leading contribution to the
integral over a steepest descent contour comes from a small vicinity of the
saddle point, the contributions from $C_{\lambda_{0}}^{1}$ and $C_{\lambda
_{2}}^{2}$ can also be neglected. Moreover, $\lambda_{2}\left(  a\right)  $ is
on the steepest descent contour $C_{\lambda_{0}}$ passing through $\lambda
_{0}\left(  a\right)  $, and hence $\left\vert I_{\lambda_{2}\left(  a\right)
}\right\vert \ll$ $\left\vert I_{\lambda_{0}\left(  a\right)  }\right\vert $.
So $I_{\lambda_{2}\left(  a\right)  }$ can be neglected for the integral over
$C_{>}$. Therefore when $1\ll\rho\ll\tilde{\alpha}^{-2}$, the asymptotic
behavior of the solution%
\begin{equation}
\psi_{2}\left(  \rho\right)  =\left\vert \rho\right\vert ^{\frac{1}{2}}%
\int_{C_{>}}\exp\left[  \left\vert \rho\right\vert ^{\frac{3}{2}}f_{-}\left(
s\right)  \right]  ds,
\end{equation}
is
\begin{equation}
\psi_{2}\left(  \rho\right)  \sim I_{\lambda_{0}\left(  a\right)  }\text{.}%
\end{equation}
When $-1\gg\rho\gg-\tilde{\alpha}^{-2}$, the asymptotic behavior of $\psi
_{2}\left(  \rho\right)  $ is
\begin{equation}
\psi_{2}\left(  \rho\right)  \sim I_{\lambda_{3}\left(  a\right)  }\text{.}%
\end{equation}
To better illustrate the contours, we plot these contours for $g\left(
x\right)  =\frac{\tan\left(  x\right)  }{x}$ in the appendix.

Now suppose that $p^{2}\left(  \xi\right)  $ has a simple (first order) at
$\xi=0$ and $F\equiv-\frac{dp^{2}\left(  \xi\right)  }{d\xi}|_{\xi=0}>0$. A
linear approximation to the potential $p^{2}\left(  \xi\right)  $\ near the
turning point $\xi=0$ is
\begin{equation}
p^{2}\left(  \xi\right)  \approx-F\xi. \label{eq:linearp}%
\end{equation}
In the vicinity of $\xi=0$, if we make change of variables $\xi=\ell_{F}\rho$
and $\alpha=\ell_{F}\tilde{\alpha}\hbar^{-1}$, where $\ell_{F}=\hbar
^{2/3}F^{-1/3}$, eqn. $\left(  \ref{eq:DeformedEq}\right)  $ becomes eqn.
$\left(  \ref{eq:linearDE}\right)  $. Thus in the region where eqn. $\left(
\ref{eq:linearp}\right)  $ holds, we conclude that $\psi_{1}\left(
\rho\right)  $ and $\psi_{2}\left(  \rho\right)  $ are solutions of eqn.
$\left(  \ref{eq:DeformedEq}\right)  $. On the other hand, we find for the
linear approximation of $p^{2}\left(  \xi\right)  $ that%
\begin{align}
&  \frac{1}{\hbar}\int_{0}^{\xi}\left\vert p\left(  \xi\right)  \right\vert
\lambda_{k}\left(  \alpha\left\vert p\left(  \xi\right)  \right\vert \right)
d\xi\nonumber\\
&  =-\frac{\ell_{F}^{3}}{\hbar^{3}}\int_{0}^{\frac{\hbar}{\ell_{F}}\sqrt
{-\rho}}\left\vert p\right\vert \lambda_{k}\left(  \alpha\left\vert
p\right\vert \right)  d\left(  p^{2}\right) \nonumber\\
&  =-\frac{\ell_{F}^{3}}{\hbar^{3}}\left[  -\frac{\hbar^{3}\rho\sqrt
{\left\vert \rho\right\vert }}{\ell_{F}^{3}}\lambda_{k}\left(  a\right)
+\int_{0}^{\frac{\hbar}{\ell_{F}}\sqrt{\left\vert \rho\right\vert }\lambda
_{k}\left(  a\right)  }u^{2}g^{2}\left(  -i\alpha u\right)  du\right]
\label{eq:wkbI}\\
&  =\left\vert \rho\right\vert ^{\frac{3}{2}}\left[  \text{sgn}\left(
\rho\right)  \lambda_{k}\left(  a\right)  -\int_{0}^{\lambda_{k}\left(
a\right)  }s^{2}g^{2}\left(  -ias\right)  ds\right] \nonumber\\
&  =\left\vert \rho\right\vert ^{\frac{3}{2}}f_{\pm}\left(  \lambda_{k}\left(
a\right)  \right)  ,\nonumber
\end{align}
where we use eqn. $\left(  \ref{eq:linearp}\right)  $ for $p^{2}$ in the
second line, $u=\left\vert p\right\vert \lambda_{k}\left(  \alpha\left\vert
p\right\vert \right)  $ in the third line, and $s=u/\frac{\hbar}{\ell_{F}%
}\sqrt{\left\vert \rho\right\vert }$ in the fourth line. Defining $\alpha
_{k}\left(  a\right)  $ and $\theta_{k}\left(  a\right)  $ as in%
\begin{equation}
\lambda_{k}\left(  a\right)  =\left\vert \lambda_{k}\left(  a\right)
\right\vert e^{i\left(  \pi k/2+\alpha_{k}\left(  a\right)  \right)  }\text{
and }\theta_{k}\left(  a\right)  =\arg\left[  1-ia\lambda_{k}^{2}\left(
a\right)  e^{-i\pi k/2}g^{\prime}\left(  -ia\lambda_{k}\left(  a\right)
\right)  \right]  ,
\end{equation}
one obtains for the linear approximation of $p^{2}\left(  \xi\right)  $ that
\begin{equation}
\left\vert \left[  x^{2}g^{2}\left(  \alpha x\right)  \right]  ^{\prime
}|_{x=-i\left\vert p\left(  \xi\right)  \right\vert \lambda_{k}\left(
\alpha\left\vert p\left(  \xi\right)  \right\vert \right)  }\right\vert
=\frac{\hbar}{\ell_{F}}\sqrt{\left\vert \rho\right\vert }\left[  s^{2}%
g^{2}\left(  -ias\right)  \right]  ^{\prime}|_{s=\lambda_{k}\left(  a\right)
}e^{-i\left(  \pi k/2+\theta_{k}\left(  a\right)  -\alpha_{k}\left(  a\right)
\right)  }, \label{eq:wkbF}%
\end{equation}
where $s=i\frac{\ell_{F}}{\hbar\sqrt{\left\vert \rho\right\vert }}x$. From
eqns. $\left(  \ref{eq:wkbI}\right)  $ and $\left(  \ref{eq:wkbF}\right)  $,
we have%
\begin{equation}
\frac{\exp\left(  \frac{1}{\hbar}\int\left\vert p\left(  \xi\right)
\right\vert \lambda_{k}\left(  \alpha\left\vert p\left(  \xi\right)
\right\vert \right)  d\xi\right)  }{\sqrt{\left\vert \left[  x^{2}g^{2}\left(
\alpha x\right)  \right]  ^{\prime}|_{x=-i\left\vert p\left(  \xi\right)
\right\vert \lambda_{k}\left(  \alpha\left\vert p\left(  \xi\right)
\right\vert \right)  }\right\vert }}\sim\sqrt{\frac{\ell_{F}}{2\pi\hbar}}%
C_{k}e^{i\left(  \pi k/4+\theta_{k}\left(  a\right)  /2-\alpha_{k}\left(
a\right)  /2\right)  }I_{\lambda_{k}\left(  a\right)  }. \label{eq:matching}%
\end{equation}

When $\left\vert \xi\right\vert \ll F^{-1}\alpha^{-2}$ $\left(  \alpha
\left\vert p\left(  \xi\right)  \right\vert \ll1\right)  $, the condition
$\left(  \ref{eq:Condition}\right)  $ for validity of the WKB approximation
becomes
\begin{equation}
\left\vert \xi\right\vert \gg\ell_{F}\text{.}%
\end{equation}
In terms of $\rho$, the WKB solutions $\left(  \ref{eq:wkbp>0}\right)  $ and
$\left(  \ref{eq:wkbp<0}\right)  $ for the linear approximation of
$p^{2}\left(  \xi\right)  $ are valid when $\tilde{\alpha}^{-2}\gg\left\vert
\rho\right\vert \gg$ $1$. However, we can approximate $\psi_{1}\left(
\rho\right)  $ and $\psi_{2}\left(  \rho\right)  $ by their leading asymptotic
behaviors for $\tilde{\alpha}^{-2}\gg\left\vert \rho\right\vert \gg$ $1$.
Since%
\begin{equation}
I_{\lambda_{1}\left(  a\right)  }+I_{\lambda_{3}\left(  a\right)  }%
\overset{-1\gg\rho\gg-\tilde{\alpha}^{-2}}{\sim}\psi_{1}\left(  \rho\right)
\overset{1\ll\rho\ll\alpha^{-2}}{\sim}I_{\lambda_{2}\left(  a\right)  },
\end{equation}
we use eqn. $\left(  \ref{eq:matching}\right)  $ to match WKB solutions
$\left(  \ref{eq:wkbp>0}\right)  $ and $\left(  \ref{eq:wkbp<0}\right)  $ with
$\psi_{1}\left(  \rho\right)  $ over the overlap region $\tilde{\alpha}%
^{-2}\gg\left\vert \rho\right\vert \gg$ $1$ and find that one connection
formula around the turning point is%
\begin{equation}
\psi_{WKB}\left(  \xi\right)  =\left\{
\begin{array}
[c]{c}%
Ce^{i\left[  \alpha_{2}\left(  a\right)  /2-\theta_{2}\left(  a\right)
/2\right]  }\psi_{WKB}^{2}\left(  \xi\right)  \text{ for }\xi>0\\
Ce^{i\left[  \pi/4+\alpha_{1}\left(  a\right)  /2-\theta_{1}\left(  a\right)
/2\right]  }\psi_{WKB}^{1}\left(  \xi\right)  +Ce^{i\left[  -\pi/4+\alpha
_{3}\left(  a\right)  /2-\theta_{3}\left(  a\right)  /2\right]  }\psi
_{WKB}^{3}\left(  \xi\right)  \text{ for }\xi<0
\end{array}
.\right.  \label{eq:wkbC1}%
\end{equation}
Similarly for $\psi_{2}\left(  \rho\right)  $, we find that another connection
formula is
\begin{equation}
\psi_{WKB}\left(  \xi\right)  =\left\{
\begin{array}
[c]{c}%
Ce^{i\left[  \alpha_{0}\left(  a\right)  /2-\theta_{0}\left(  a\right)
/2\right]  }\psi_{WKB}^{0}\left(  \xi\right)  \text{ for }\xi>0\\
Ce^{i\left[  -3\pi/4+\alpha_{3}\left(  a\right)  /2-\theta_{3}\left(
a\right)  /2\right]  }\psi_{WKB}^{3}\left(  \xi\right)  \text{ for }\xi<0
\end{array}
.\right.  \label{eq:wkbC2}%
\end{equation}

\subsection{Bohr-Sommerfeld Quantization and Tunneling Rates}

In the usual quantum mechanics, the Bohr-Sommerfeld quantization condition and
tunneling rates through potential barriers can be derived from the WKB
connection formulas. However, more conditions are needed to be imposed on the
function $g\left(  x\right)  $ to obtain the he Bohr-Sommerfeld quantization
condition and tunneling rates in our case. In fact, we further require that
$g\left(  z\right)  =g\left(  -z\right)  $ for $z\in C$, and both $g\left(
x\right)  $ and $\tilde{g}\left(  x\right)  \equiv g\left(  ix\right)  $ are
real functions when $x\in R$. Note that $g\left(  x\right)  =1\pm x^{2}$ and
$\tan x/x$ satisfy the above requirements. Under these requirements, the
solutions $\lambda_{k}\left(  a\right)  $ of eqn. $\left(  \ref{eq:lamda}%
\right)  $ satisfies the properties:

\begin{enumerate}
\item $\lambda_{0}\left(  a\right)  =-\lambda_{2}\left(  a\right)  $ and
$\lambda_{1}\left(  a\right)  =-\lambda_{3}\left(  a\right)  $,

\item For small enough value of $a$, one has that $\lambda_{k}\left(
a\right)  =\left\vert \lambda_{k}\left(  a\right)  \right\vert e^{ik\pi/2},$
\end{enumerate}

where the second property comes from the fact that for any real function
$f\left(  x\right)  $, the equation $xf\left(  ax\right)  =1$ always has a
real solution around $x=1$ if $a$ is small enough. Furthermore, these
properties imply that $\theta_{k}\left(  a\right)  =\alpha_{k}\left(
a\right)  =0$. In the region $\xi>0\,\ $where $p^{2}\left(  \xi\right)  <0$,
$\psi_{WKB}^{0}\left(  \xi\right)  $ is exponentially increasing away from the
turning point $\xi=0$ while $\psi_{WKB}^{2}\left(  \xi\right)  $ is
exponentially decreasing. In the region $\xi<0\,\ $where $p^{2}\left(
\xi\right)  >0$, $\psi_{WKB}^{1}\left(  \xi\right)  $ and $\psi_{WKB}%
^{3}\left(  \xi\right)  $ are oscillatory solutions and propagate toward and
away from the turning point, respectively.

Now suppose that $p^{2}\left(  \xi\right)  $ has two simple points at $\xi=A$
and $\xi=B$ with $A<B$. We also assume that $p^{2}\left(  \xi\right)  <0$ if
$\xi>B$ or $\xi<A$, and that $p^{2}\left(  \xi\right)  >0$ if $A<\xi<B$. To
study the boundary-value problem with $\psi\left(  \pm\infty\right)  =0$, we
consider the two-turning-point solutions by matching two one-turning-point
solutions: the first one is from $+\infty$ through $B$ and down to near $A$;
the second is $-\infty$ through $A$ and down to near $B$. We can use the WKB
connection formula $\left(  \ref{eq:wkbC1}\right)  $ to show that the first
one-turning-point solution that decays like%
\begin{equation}
C\psi_{WKB}^{2}\left(  \xi\right)  =\frac{C\exp\left(  \frac{1}{\hbar}\int
_{B}^{\xi}\left\vert p\left(  \xi^{\prime}\right)  \right\vert \lambda
_{2}\left(  \alpha\left\vert p\left(  \xi^{\prime}\right)  \right\vert
\right)  d\xi^{\prime}\right)  }{\sqrt{\left\vert \left[  x^{2}g^{2}\left(
\alpha x\right)  \right]  ^{\prime}|_{x=i\left\vert p\left(  \xi\right)
\right\vert \lambda_{2}\left(  \alpha\left\vert p\left(  \xi\right)
\right\vert \right)  }\right\vert }},
\end{equation}
as $\xi\rightarrow+\infty$ behaves like%
\begin{align}
&  \frac{2C}{\sqrt{\left\vert \left[  x^{2}g^{2}\left(  \alpha x\right)
\right]  ^{\prime}|_{x=\left\vert p\left(  \xi\right)  \lambda_{1}\left(
\alpha\left\vert p\left(  \xi\right)  \right\vert \right)  \right\vert
}\right\vert }}\sin\left(  \frac{1}{\hbar}\int_{\xi}^{B}\left\vert p\left(
\xi^{\prime}\right)  \lambda_{1}\left(  \alpha\left\vert p\left(  \xi^{\prime
}\right)  \right\vert \right)  \right\vert d\xi^{\prime}+\frac{\pi}{4}\right)
\nonumber\\
&  =-\frac{2C\sin\left[  \frac{1}{\hbar}\int_{A}^{\xi}\left\vert p\left(
\xi^{\prime}\right)  \lambda_{1}\left(  \alpha\left\vert p\left(  \xi^{\prime
}\right)  \right\vert \right)  \right\vert d\xi^{\prime}+\frac{\pi}%
{4}-\left\{  \frac{1}{\hbar}\int_{A}^{B}\left\vert p\left(  \xi^{\prime
}\right)  \lambda_{1}\left(  \alpha\left\vert p\left(  \xi^{\prime}\right)
\right\vert \right)  \right\vert d\xi^{\prime}+\frac{\pi}{2}\right\}  \right]
}{\sqrt{\left\vert \left[  x^{2}g^{2}\left(  \alpha x\right)  \right]
^{\prime}|_{x=\left\vert p\left(  \xi\right)  \lambda_{1}\left(
\alpha\left\vert p\left(  \xi\right)  \right\vert \right)  \right\vert
}\right\vert }}, \label{eq:wkbAB}%
\end{align}
in the region between $A$ and $B$. Similarly, the second one-turning-point
solution that decays like%
\begin{equation}
\frac{C^{\prime}\exp\left(  \frac{1}{\hbar}\int_{\xi}^{A}\left\vert p\left(
\xi^{\prime}\right)  \right\vert \lambda_{2}\left(  \alpha\left\vert p\left(
\xi^{\prime}\right)  \right\vert \right)  d\xi^{\prime}\right)  }%
{\sqrt{\left\vert \left[  x^{2}g^{2}\left(  \alpha x\right)  \right]
^{\prime}|_{x=i\left\vert p\left(  \xi\right)  \right\vert \lambda_{2}\left(
\alpha\left\vert p\left(  \xi\right)  \right\vert \right)  }\right\vert }},
\end{equation}
as $\xi\rightarrow-\infty$ behaves like%
\begin{equation}
\frac{2C^{\prime}}{\sqrt{\left\vert \left[  x^{2}g^{2}\left(  \alpha x\right)
\right]  ^{\prime}|_{x=\left\vert p\left(  \xi\right)  \lambda_{1}\left(
\alpha\left\vert p\left(  \xi\right)  \right\vert \right)  \right\vert
}\right\vert }}\sin\left(  \frac{1}{\hbar}\int_{A}^{\xi}\left\vert p\left(
\xi^{\prime}\right)  \lambda_{1}\left(  \alpha\left\vert p\left(  \xi^{\prime
}\right)  \right\vert \right)  \right\vert d\xi^{\prime}+\frac{\pi}{4}\right)
, \label{eq:wkbBA}%
\end{equation}
in the region between $A$ and $B$. In order that the two solutions in eqns.
$\left(  \ref{eq:wkbAB}\right)  $ and $\left(  \ref{eq:wkbBA}\right)  $ match
over the region between $A$ and $B$, we require that the expression in the
curly bracket of eqn. $\left(  \ref{eq:wkbAB}\right)  $ is an integral
multiple of $\pi$. Therefore, we derive the Bohr-Sommerfeld quantization
condition:%
\begin{equation}
\frac{1}{\hbar}\int_{A}^{B}\left\vert p\left(  \xi\right)  \lambda_{1}\left(
\alpha\left\vert p\left(  \xi\right)  \right\vert \right)  \right\vert
d\xi=\left(  n+\frac{1}{2}\right)  \pi+\mathcal{O}\left(  \hbar\right)  ,
\label{eq:BS}%
\end{equation}
where $n$ is a nonnegative integer.

We now consider WKB description of tunneling, in which $p\left(
-\infty\right)  =$ $p\left(  +\infty\right)  >0$ and $p^{2}\left(  \xi\right)
$ vanishes at two turning points $x=A\ $and $x=B$. Moreover, there are two
classical allowed regions $p^{2}\left(  \xi\right)  >0$, Region I with $\xi<A$
and Region III with $\xi>B$, and one forbidden region $p^{2}\left(
\xi\right)  <0$, Region II with $A<\xi<B$. To describe tunneling, we need to
choose appropriate boundary conditions in the classical allowed regions. We
postulate that there is only a transmitted wave in Region III:%
\begin{align}
F\psi_{WKB}^{3}\left(  \xi\right)   &  =\frac{F\exp\left(  \frac{1}{\hbar}%
\int_{0}^{B-\xi}\left\vert p\left(  B-\xi^{\prime}\right)  \right\vert
\lambda_{3}\left(  \alpha\left\vert p\left(  B-\xi^{\prime}\right)
\right\vert \right)  d\xi^{\prime}\right)  }{\sqrt{\left\vert \left[
x^{2}g^{2}\left(  \alpha x\right)  \right]  ^{\prime}|_{x=\left\vert p\left(
\xi\right)  \lambda_{3}\left(  \alpha\left\vert p\left(  \xi\right)
\right\vert \right)  \right\vert }\right\vert }}\nonumber\\
&  =\frac{F\exp\left(  \frac{i}{\hbar}\int_{B}^{\xi}\left\vert p\left(
\xi^{\prime}\right)  \lambda_{1}\left(  \alpha\left\vert p\left(  \xi^{\prime
}\right)  \right\vert \right)  \right\vert d\xi^{\prime}\right)  }%
{\sqrt{\left\vert \left[  x^{2}g^{2}\left(  \alpha x\right)  \right]
^{\prime}|_{x=\left\vert p\left(  \xi\right)  \lambda_{1}\left(
\alpha\left\vert p\left(  \xi\right)  \right\vert \right)  \right\vert
}\right\vert }}.
\end{align}
Using the WKB connection formula $\left(  \ref{eq:wkbC2}\right)  $, we find
that the WKB solution in Region II is
\begin{equation}
\frac{Fe^{3i\pi/4}\exp\left(  \frac{1}{\hbar}\int_{\xi}^{B}\left\vert p\left(
\xi^{\prime}\right)  \right\vert \lambda_{0}\left(  \alpha\left\vert p\left(
\xi^{\prime}\right)  \right\vert \right)  d\xi^{\prime}\right)  }%
{\sqrt{\left\vert \left[  x^{2}g^{2}\left(  \alpha x\right)  \right]
^{\prime}|_{x=i\left\vert p\left(  \xi\right)  \right\vert \lambda_{0}\left(
\alpha\left\vert p\left(  \xi\right)  \right\vert \right)  }\right\vert }%
}=\frac{Fe^{3i\pi/4}e^{\eta}\exp\left(  \frac{1}{\hbar}\int_{A}^{\xi
}\left\vert p\left(  \xi^{\prime}\right)  \right\vert \lambda_{2}\left(
\alpha\left\vert p\left(  \xi^{\prime}\right)  \right\vert \right)
d\xi^{\prime}\right)  }{\sqrt{\left\vert \left[  x^{2}g^{2}\left(  \alpha
x\right)  \right]  ^{\prime}|_{x=i\left\vert p\left(  \xi\right)  \right\vert
\lambda_{0}\left(  \alpha\left\vert p\left(  \xi\right)  \right\vert \right)
}\right\vert }},
\end{equation}
where
\begin{equation}
\eta=\frac{1}{\hbar}\int_{A}^{B}\left\vert p\left(  \xi\right)  \right\vert
\lambda_{0}\left(  \alpha\left\vert p\left(  \xi\right)  \right\vert \right)
d\xi.
\end{equation}
In Region I, the WKB approximation solution includes a wave incident the
barrier and a reflected wave:%
\begin{equation}
\frac{Fe^{\eta}\left[  e^{i\pi/2}\exp\left(  \frac{i}{\hbar}\int_{A}^{\xi
}\left\vert p\left(  \xi^{\prime}\right)  \lambda_{1}\left(  \alpha\left\vert
p\left(  \xi^{\prime}\right)  \right\vert \right)  \right\vert d\xi^{\prime
}\right)  +e^{i\pi}\exp\left(  -\frac{i}{\hbar}\int_{A}^{\xi}\left\vert
p\left(  \xi^{\prime}\right)  \lambda_{1}\left(  \alpha\left\vert p\left(
\xi^{\prime}\right)  \right\vert \right)  \right\vert d\xi^{\prime}\right)
\right]  }{\sqrt{\left\vert \left[  x^{2}g^{2}\left(  \alpha x\right)
\right]  ^{\prime}|_{x=\left\vert p\left(  \xi\right)  \lambda_{1}\left(
\alpha\left\vert p\left(  \xi\right)  \right\vert \right)  \right\vert
}\right\vert }},
\end{equation}
where the first term in the square bracket is the incident wave with the
amplitude $A=Fe^{\eta}e^{i\pi/2}$. Therefore, the transmission probability is%
\begin{equation}
T=\frac{\left\vert F\right\vert ^{2}}{\left\vert A\right\vert ^{2}}\sim
e^{-2\eta}. \label{eq:tunneling}%
\end{equation}

\section{Examples}

\label{Sec:Examples}

In this section, we use the results obtained in the previous section to
discuss some interesting examples in the deformed quantum mechanics. To
compare results in the literature, we first show that how a deformed
Schrodinger-like equation appears in the two effective models of quantum
gravity mentioned in Introduction. The first one is the GUP
\cite{IN-Maggiore:1993kv,IN-Kempf:1994su}, derived from the modified
fundamental commutation relation:
\begin{equation}
\lbrack X,P]=i\hbar f\left(  P\right)  ,\label{eq:1dGUP}%
\end{equation}
where $f\left(  P\right)  $ is some function. This model is inspired by the
prediction of the existence of a minimal length in various theories of quantum
gravity, such as string theory, loop quantum gravity and quantum geometry
\cite{IN-Townsend:1977xw,IN-Amati:1988tn,IN-Konishi:1989wk}. For example, if
$f\left(  P\right)  =1+\beta P^{2}$, the minimal measurable length is
\begin{equation}
l_{\min}=\hbar\sqrt{\beta}.
\end{equation}
The GUP has been extensively studied recently, see for example
\cite{IN-Chang:2001bm,IN-Brau:1999uv,IN-Das:2008kaa,IN-Hossenfelder:2003jz,IN-Ali:2009zq,IN-Li:2002xb,IN-Brau:2006ca,IN-Scardigli:1999jh,IN-Scardigli:2014qka}%
. For a review of the GUP, see \cite{IN-Chen:2014xgj,IN-Hossenfelder:2012jw}.
To study 1D quantum mechanics with the deformed commutators $(\ref{eq:1dGUP}%
)$, one can exploit the following representation for $X$ and $P$ in the the
position representation:%
\begin{equation}
X=X_{0}\text{ and }P=P\left(  \frac{\hbar}{i}\frac{\partial}{\partial
x}\right)  ,
\end{equation}
where the function $P\left(  x\right)  $ is the solution of the differential
equation $\frac{dP\left(  x\right)  }{dx}=f\left(  P\right)  $. Usually, we
could write $P\left(  x\right)  $ in terms of the function $g\left(  x\right)
$ as
\begin{equation}
P\left(  x\right)  =xg\left(  \alpha x\right)  ,
\end{equation}
where $\alpha$ is a parameter and can be related to the minimal length:
$l_{\min}=\hbar\alpha$. In the $f\left(  P\right)  =1+\beta P^{2}$ case, one
finds that
\begin{equation}
g\left(  x\right)  =\frac{\tan\left(  x\right)  }{x},
\end{equation}
with $\alpha=\sqrt{\beta}$.

The second is the MDR. It is believed that the trans-Planckian physics
manifests itself in certain modifications of the existing models. Thus, even
though a complete theory of quantum gravity is not yet available, we can use a
\textquotedblleft bottom-to-top approach\textquotedblright\ to probe the
possible effects of quantum gravity on our current theories and experiments
\cite{IN-AmelinoCamelia:2004hm}. One possible way of how such an approach
works is via Planck-scale modifications of the usual energy-momentum
dispersion relation
\begin{equation}
p^{2}=E^{2}-m^{2}, \label{eq:uMDR}%
\end{equation}
whose possibility has been considered in the quantum-gravity literature
\cite{IN-AmelinoCamelia:1997gz,IN-Garay:1998wk,IN-AmelinoCamelia:2002wr,IN-Magueijo:2002am}%
. The modified dispersion relation (MDR) has been reviewed in the framework of
Lorentz violating theories in \cite{IN-Mattingly:2005re,IN-Liberati:2013xla}.
It has also been shown that the MDR might play a role in astronomical and
cosmological observations, such as the threshold anomalies of ultra high
energy cosmic rays and TeV photons
\cite{IN-AmelinoCamelia:1997gz,IN-Colladay:1998fq,IN-Coleman:1998ti,IN-AmelinoCamelia:2000zs,IN-Jacobson:2001tu,IN-Jacobson:2003bn}%
. In most cases, the MDR could take the form%
\begin{equation}
E^{2}=p^{2}g^{2}\left(  \alpha p\right)  +m^{2}, \label{eq:MDR}%
\end{equation}
where $\alpha=\Lambda^{-1}$, and $\Lambda$ is the cut off scale which
characterizes the new physics in Planck scale. To obtain the deformed wave
equations, we can define the modified differential operator by%
\begin{equation}
P=P_{0}g\left(  \alpha P_{0}\right)  , \label{eq:pp0}%
\end{equation}
where $P_{0}=\frac{\hbar}{i}\frac{\partial}{\partial x}$, and replace $P_{0}$
with $P$ \cite{IN-Corley:1996ar,IN-Pavlopoulos:1967dm}. Another way to obtain
eqn. $\left(  \ref{eq:pp0}\right)  $ is using effective field theories (EFT).
In fact, one has to break or modify the global Lorentz symmetry in the
classical limit of the quantum gravity to have a MDR. There are several
possibilities for breaking or modifying the Lorentz symmetry, one of which is
that Lorentz invariance is spontaneously broken by extra tensor fields taking
on vacuum expectation values \cite{IN-Colladay:1998fq}. The most conservative
approach for a framework in which to describe MDR is considering an EFT, in
which modifications to the dispersion relation can be described by the higher
dimensional operators. In \cite{In-Wang:2015lqa}, we constructed a such EFT
for a scalar field, which only contained the kinetic terms and the usual
minimal gravitational couplings. It showed for the MDR $\left(  \ref{eq:MDR}%
\right)  $ that the deformed Klein-Gordon equation could be obtained via
making the replacement $\left(  \ref{eq:pp0}\right)  $. In this section, we
take Planck units $c=G=\hbar=k=1$.

\subsection{Harmonic Oscillator}

We first study a simple example, bound states of an harmonic oscillator in the
potential $V\left(  x\right)  =\frac{m\omega^{2}x^{2}}{2}$. For the harmonic
oscillator, the deformed Schrodinger equation is
\begin{equation}
\left[  -\partial_{x}^{2}g^{2}\left(  -i\alpha\partial_{x}\right)
-p^{2}\left(  x\right)  \right]  \psi\left(  x\right)  =0, \label{eq:HOeqn}%
\end{equation}
where $p^{2}\left(  x\right)  =m^{2}\omega^{2}\left(  x_{0}^{2}-x^{2}\right)
$ and $x_{0}=\sqrt{\frac{2E}{m\omega^{2}}}$. Considering $g\left(  x\right)
=\frac{\tan\left(  x\right)  }{x}$, one then has%
\begin{equation}
\lambda_{1}\left(  a\right)  =\frac{i}{a}\arctan a.
\end{equation}
In this case, the Bohr-Sommerfeld quantization condition becomes%
\begin{equation}
2\int_{0}^{x_{0}}\left\vert p\left(  x\right)  \frac{i}{\alpha\left\vert
p\left(  x\right)  \right\vert }\arctan\left(  \alpha\left\vert p\left(
x\right)  \right\vert \right)  \right\vert dx=\left(  n+\frac{1}{2}\right)
\pi,
\end{equation}
which gives the energy levels of bound states%
\begin{equation}
E_{n}=E_{0,n}\left(  1+\frac{\alpha^{2}mE_{0,n}}{2}\right)  , \label{eq:HOWKB}%
\end{equation}
for $n=0,1,2\cdots$, where $E_{0,n}=\left(  n+\frac{1}{2}\right)  \omega$.

In \cite{EX-Chang:2001kn}, the differential equation $\left(  \ref{eq:HOeqn}%
\right)  $ with $g\left(  x\right)  =\frac{\tan\left(  x\right)  }{x}$ was
solved exactly in the momentum space, and the exact energy levels were given
by%
\begin{align}
E_{n}  &  =\omega\left[  \left(  n+\frac{1}{2}\right)  \sqrt{1+\frac{\beta
^{2}m^{2}\omega^{2}}{4}}+\left(  n^{2}+n+\frac{1}{2}\right)  \frac{\beta
m\omega}{2}\right] \nonumber\\
&  =E_{0,n}\left[  \sqrt{1+\alpha^{4}m^{2}E_{0,n}^{2}\left(  \frac{\omega
}{2E_{0,n}}\right)  ^{2}}+\frac{\alpha^{2}mE_{0,n}}{2}+\frac{\alpha
^{2}mE_{0,n}}{2}\left(  \frac{\omega}{2E_{0,n}}\right)  ^{2}\right]  ,
\label{eq:HOexact}%
\end{align}
where $\beta$ in \cite{EX-Chang:2001kn} is our $\alpha^{2}$. The WKB
approximation is a good approximation when the de Broglie wavelength $\lambda$
of a particle is smaller than the characteristic length $L$ of the potential.
Thus, the higher order WKB corrections are suppressed by powers of
$\frac{\lambda}{L}$ relative to the leading term. For the harmonic oscillator
in the $n$th energy level, we find that%
\begin{equation}
\frac{\lambda}{L}\sim\frac{\frac{1}{\sqrt{2mE_{0,n}}}}{x_{0}}=\frac{\omega
}{2E_{0,n}}\text{.}%
\end{equation}
Therefore, the terms proportional to $\left(  \frac{\omega}{2E_{0,n}}\right)
^{2}$ in eqn. $\left(  \ref{eq:HOexact}\right)  $ are the higher order
corrections, and hence the WKB result $\left(  \ref{eq:HOWKB}\right)  $ agrees
with the leading term of the WKB expansion of the exact result $\left(
\ref{eq:HOexact}\right)  $. It is noteworthy that the momentum representation
of the position operator is quadratic in the $g\left(  x\right)  =\frac
{\tan\left(  x\right)  }{x}$ case, and hence eqn. $\left(  \ref{eq:HOeqn}%
\right)  $ can be solved exactly. However for a generic case, the WKB
approximation could provide a simple way to estimate the quantum gravity's corrections.

\subsection{Schwinger Effect}

The Schwinger effect \cite{EX-Schwinger:1951nm} is an example of creation of
particles by external fields, which consists in the creation of
electron--positron pairs by a strong electric field. The WKB approximation and
the Dirac sea picture can be used to illustrate its main physical features in
a heuristic way. Suppose the electrostatic potential is%
\begin{equation}
V\left(  x\right)  =\left\{
\begin{array}
[c]{l}%
0\text{ \ \ \ \ \ \ \ \ }x<0\\
-\mathcal{E}x\text{ \ \ \ }0<x<L\\
-\mathcal{E}L\text{\ \ \ }L<x
\end{array}
\right.  .
\end{equation}
If $e\mathcal{E}L>2m$, there exist states in the Dirac sea with $x>L$ having
the same energy as some positive energy states in the region $x<0$. The
electrons with energy $m\leq E\leq e\mathcal{E}L-m$ in the Dirac sea could
tunnel through the classically forbidden region leaving a hole behind, which
can be described as the production of an electron--positron pair out of the
vacuum by the effect of the electric field.

\begin{figure}[tb]
\begin{center}
\subfigure[{~The plot of $p^{2}\left(  x\right)  $ for the Schwinger effect.  }]
{\includegraphics[width=0.45\textwidth]{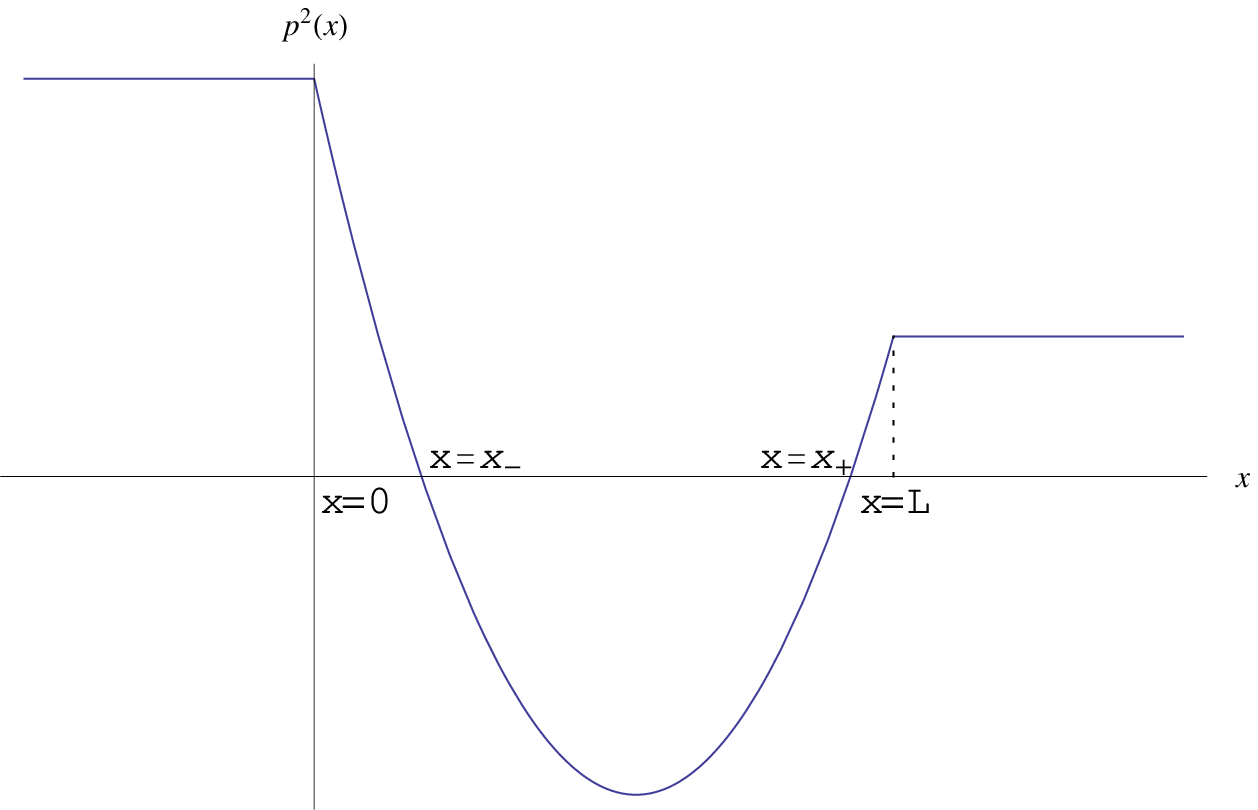}\label{fig:P:a}}
\subfigure[{~The plot of $p^{2}\left(  a\right)  $ for quantum cosmology. }] {\includegraphics[width=0.45\textwidth]{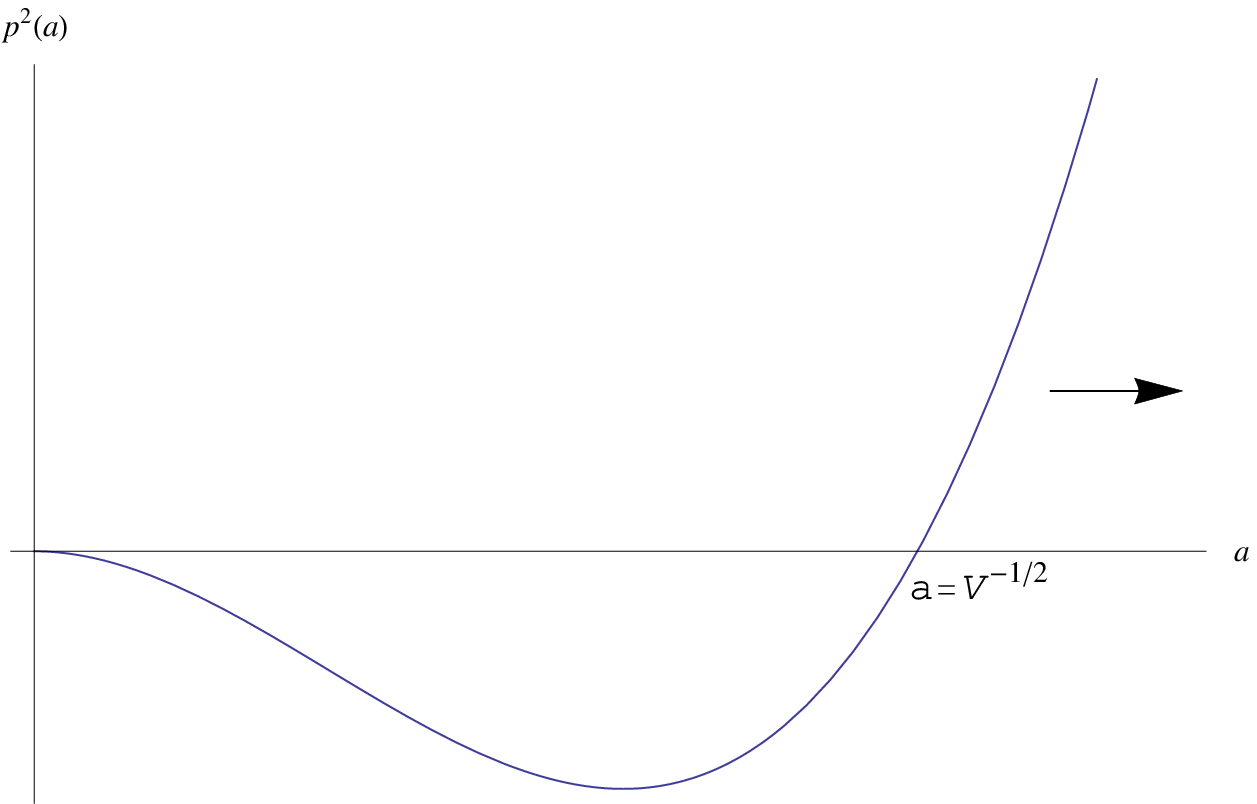}\label{fig:P:b}}
\end{center}
\caption{The plot of $p^{2}\left(  x\right)  $ and $p^{2}\left(  a\right)  $.}%
\label{fig:P}%
\end{figure}

For simplicity, we assume that electrons are described by a wave function
$\Psi\left(  t,x\right)  $ satisfying the $1+1$ dimensional deformed
Klein-Gordon equation
\begin{equation}
\left\{  \left[  i\partial_{t}+eV\left(  x\right)  \right]  ^{2}-\left(
i\partial_{x}\right)  ^{2}g^{2}\left(  -i\alpha\partial_{x}\right)
-m^{2}\right\}  \Psi\left(  t,x\right)  =0. \label{eq:KG}%
\end{equation}
Since the potential only depends on $x$, we could the following ansatz for
$\Psi\left(  t,x\right)  $%
\begin{equation}
\Psi\left(  t,x\right)  =e^{-iEt}\psi\left(  x\right)  .
\end{equation}
Substituting this expression in eqn. $\left(  \ref{eq:KG}\right)  $ results in
a deformed Schrodinger-like equation%
\begin{equation}
-\partial_{x}^{2}g^{2}\left(  -i\alpha\partial_{x}\right)  \psi\left(
x\right)  -p^{2}\left(  \xi\right)  \psi\left(  x\right)  =0,
\end{equation}
where $p^{2}\left(  x\right)  =\left[  E+eV\left(  x\right)  \right]
^{2}-m^{2}$. The function $p^{2}\left(  x\right)  $ has two turning points:%
\begin{equation}
x_{\pm}=\frac{1}{e\mathcal{E}}\left(  E\pm m\right)  ,
\end{equation}
where $0<x_{\_}<x_{+}<L$ since $m\leq E\leq e\mathcal{E}L-m$. We plot
$p^{2}\left(  x\right)  $ in FIG. \ref{fig:P:a}. Then, the WKB transmission
coefficient is given by%
\begin{equation}
T=\exp\left[  -2\int_{x_{\_}}^{x_{+}}\left\vert p\left(  x\right)  \right\vert
\lambda_{0}\left(  \alpha\left\vert p\left(  x\right)  \right\vert \right)
dx\right]  .
\end{equation}
The number of pairs produced per unit time with energies between $E$ and
$E+dE$ is%
\begin{equation}
\frac{dN}{dt}=2T\frac{dE}{2\pi}, \label{eq:dn/dt}%
\end{equation}
where the factor of $2$ takes into account the two polarizations of the
electron. Note that the turning points $x_{\pm}$ are the positions at which
the two particles of the pair are produced. Therefore, shifting the energy by
$dE$ results in a change in the positions of the particles by $dx=\frac
{dE}{e\mathcal{E}}$. It follows from eqn. $\left(  \ref{eq:dn/dt}\right)  $
that the pair production rate per unit length is%
\begin{equation}
W=\frac{e\mathcal{E}}{\pi}T\text{.}%
\end{equation}

Considering the $g\left(  x\right)  =\tan\left(  x\right)  /x$ case in which
$\lambda_{0}\left(  a\right)  =$arctanh$\left(  a\right)  /a$, we find%
\begin{equation}
T=\exp\left[  -\frac{2\pi}{e\mathcal{E}\alpha^{2}}\left(  1-\sqrt
{1-m^{2}\alpha^{2}}\right)  \right]  .
\end{equation}
Thus, the pair production rate per unit length is%
\begin{equation}
W=\frac{e\mathcal{E}}{\pi}\exp\left[  -\frac{2\pi}{e\mathcal{E}\alpha^{2}%
}\left(  1-\sqrt{1-m^{2}\alpha^{2}}\right)  \right]  .
\end{equation}
The Schwinger mechanism can explain the Unruh effect which predicts that an
accelerating observer will observe a thermal spectrum of photons and
particle--antiparticle pairs at temperature $T=\frac{a}{2\pi},$ where $a$ is
the acceleration \cite{EX-Unruh:1976db}. In fact, considering a free particle
of charge $e$ and mass $m$ moving in a static electric field $\mathcal{E}$,
one finds that the acceleration of the particle is $a=\frac{em}{\mathcal{E}}$.
It follows that the pair production rate per unit length is
\begin{equation}
W\sim\left[  -\frac{K}{a/2\pi}\frac{1-\sqrt{1-4K^{2}\alpha^{2}}}{2K^{2}%
\alpha^{2}}\right]  ,
\end{equation}
where we identify the reduced mass $\frac{m}{2}=K$ as the energy associated
with the pair production process. The Unruh temperature reads%
\begin{equation}
T_{u}\sim\frac{a}{2\pi}\frac{2K^{2}\alpha^{2}}{1-\sqrt{1-4K^{2}\alpha^{2}}}.
\label{eq:unruhT}%
\end{equation}

For a Schwarzschild black hole of the mass $M$, the event horizon is at
$r_{h}=2M$. Since the gravitational acceleration at the event horizon is given
by%
\begin{equation}
a=\frac{M}{r_{h}^{2}}=\frac{1}{4M},
\end{equation}
it follows from eqn. $\left(  \ref{eq:unruhT}\right)  $ that the Hawking
temperature is%
\begin{equation}
T_{h}\sim T_{0}\frac{2T_{h}^{2}\alpha^{2}}{1-\sqrt{1-4T_{h}^{2}\alpha^{2}}},
\label{eq:HawkingTemp}%
\end{equation}
where $T_{0}=\frac{1}{8\pi M}$, and we estimate that the energy of radiated
particles $K\sim T_{h}$. Solving the above equation for $T_{h}$ gives%
\begin{equation}
T_{h}=\frac{T_{0}}{1+\alpha^{2}T_{0}^{2}},
\end{equation}
which shows that in the $g\left(  x\right)  =\tan\left(  x\right)  /x$ case,
the quantum gravity effects always lower the Hawking temperature. Using the
first law of the black hole thermodynamics, we find that the black hole's
entropy is%
\begin{equation}
S=\int\frac{dM}{T_{h}}\sim\frac{A}{4}+\frac{\alpha^{2}}{8\pi}\ln\left(
\frac{A}{16\pi}\right)  , \label{eq:entropy}%
\end{equation}
where $A=4\pi r_{h}^{2}=16\pi M^{2}$ is the area of the horizon.

\subsection{P\"{o}schl-Teller Potential and Quasinormal Modes of A Black Hole}

It has been long known that the usual Schrodinger equation for the
P\"{o}schl--Teller potential of the form%
\begin{equation}
V_{PT}\left(  x,b\right)  =-\frac{V_{0}}{\cosh^{2}bx},
\end{equation}
is exactly solvable. For a particle of the mass $m=1/2$, the exact bound
states are given by
\begin{equation}
E_{n}\left(  b\right)  =-b^{2}\left[  -\left(  n+\frac{1}{2}\right)  +\left(
\frac{1}{4}+\frac{V_{0}}{b^{2}}\right)  ^{\frac{1}{2}}\right]  ^{2},
\label{eq:PTExact}%
\end{equation}
for $n=0,1,2,\cdots,N-1$, where $N+\frac{1}{2}>\left(  \frac{1}{4}+\frac
{V_{0}}{b^{2}}\right)  ^{\frac{1}{2}}$.\ We now use the WKB method to solve
the deformed Schrodinger equation for the bound states in the
P\"{o}schl--Teller potential. The deformed Schrodinger equation is given by%
\begin{equation}
-\partial_{x}^{2}g^{2}\left(  -i\alpha\partial_{x}\right)  \Psi\left(
x,b\right)  +\left[  V_{PT}\left(  x,b\right)  -E\left(  b\right)  \right]
\Psi\left(  x,b\right)  =0.
\end{equation}
The Bohr-Sommerfeld quantization condition $\left(  \ref{eq:BS}\right)  $ then
leads to%
\begin{equation}
\int_{-x_{0}}^{x_{0}}\left\vert p\left(  x\right)  \lambda_{1}\left(
\alpha\left\vert p\left(  x\right)  \right\vert \right)  \right\vert
dx\approx\left(  n+\frac{1}{2}\right)  \pi, \label{eq:PTBS}%
\end{equation}
where $x=\pm x_{0}\equiv\pm b^{-1}$arccosh$\left(  \frac{V_{0}^{1/2}}%
{\sqrt{-E}}\right)  $ are the turning points, and $p\left(  x\right)
=\sqrt{E-V_{PT}\left(  x,b\right)  }$.

Now consider the $g_{\pm}\left(  x\right)  =1\pm x^{2}$ case , in which we
find that the solutions of $sg_{\pm}\left(  -ias\right)  =i$ are
\begin{equation}
\lambda_{1}^{\pm}\left(  a\right)  =i\left(  1\pm a^{2}\right)  +\mathcal{O}%
\left(  a^{4}\right)  .
\end{equation}
Solving eqn. $\left(  \ref{eq:PTBS}\right)  $ for $E$, we find that the bound
states are%
\begin{equation}
E_{n}^{\pm}\left(  b\right)  \approx-b^{2}\left[  -\left(  n+\frac{1}%
{2}\right)  +\left(  \frac{V_{0}}{b^{2}}\right)  ^{\frac{1}{2}}\mp\alpha
^{2}b^{2}\left(  n+\frac{1}{2}\right)  ^{2}\left(  n+\frac{1}{2}-\frac{3}%
{2}\left(  \frac{V_{0}}{b^{2}}\right)  ^{\frac{1}{2}}\right)  \right]  ^{2},
\label{eq:PTWKB}%
\end{equation}
where $n=0,1,2,\cdots$ such that the sum of terms in the square bracket is
non-negative. If $\alpha=0$, from comparing the exact result $\left(
\ref{eq:PTExact}\right)  $ with the WKB one $\left(  \ref{eq:PTWKB}\right)  $,
it follows that the higher WKB corrections are suppressed by powers of
$b^{2}V_{0}^{-1}$. Therefore, combining results from eqns. $\left(
\ref{eq:PTExact}\right)  $ and $\left(  \ref{eq:PTWKB}\right)  $, one obtains
the bound states:%
\begin{align}
E_{n}^{\pm}\left(  b\right)   &  =-b^{2}\left[  -\left(  n+\frac{1}{2}\right)
+\left(  \frac{1}{4}+\frac{V_{0}}{b^{2}}\right)  ^{\frac{1}{2}}\right.
\nonumber\\
&  \left.  \mp\alpha^{2}b^{2}\left(  n+\frac{1}{2}\right)  ^{2}\left(
n+\frac{1}{2}-\frac{3}{2}\left(  \frac{V_{0}}{b^{2}}\right)  ^{\frac{1}{2}%
}+\mathcal{O}\left(  bV_{0}^{-1/2}\right)  \right)  +\mathcal{O}\left(
\alpha^{4}b^{4}\right)  \right]  ^{2}. \label{eq:PTeigen}%
\end{align}

To study quasinormal modes of a static and spherically symmetric, we consider
the propagation of the massless and minimally coupled scalar particles in a
general Schwarzschild-like metric:%
\begin{equation}
ds^{2}=h\left(  r\right)  dr^{2}-\frac{dr^{2}}{h\left(  r\right)  }%
-r^{2}d\Omega,
\end{equation}
where $d\Omega$ is the solid angle. The wave equation for the scalar particles
is the Klein-Gordon equation. After the wave function $\Psi\left(
t,r,\Omega\right)  $ is decomposed into eigenmodes of normal frequency
$\omega$ and angular momentum $l$, $\Psi\left(  t,r,\Omega\right)
=e^{-i\omega t}Y_{l}^{m}\left(  \Omega\right)  R\left(  r\right)  /r$, the
Klein-Gordon equation gives a Schrodinger-like equation for $R\left(
r\right)  $ in stationary backgrounds:%
\begin{equation}
-\frac{d^{2}R\left(  r_{\ast}\right)  }{dr_{\ast}^{2}}+V\left(  r_{\ast
}\right)  R=\omega^{2}R\left(  r_{\ast}\right)  , \label{eq:scheq}%
\end{equation}
where $dr_{\ast}=dr/h\left(  r\right)  $ is the tortoise coordinate, and
\begin{equation}
V\left(  r_{\ast}\right)  =h\left(  r\right)  \left(  \frac{l\left(
l+1\right)  }{r^{2}}+\frac{1}{r}\frac{dh}{dr}\right)  . \label{eq:V}%
\end{equation}
For asymptotically flat black holes, quasinormal modes are solutions of the
wave equation $\left(  \ref{eq:scheq}\right)  $, satisfying specific boundary
conditions \cite{EX-Konoplya:2011qq}%
\begin{equation}
R\sim e^{\pm i\omega r_{\ast}}\text{, }r_{\ast}\rightarrow\pm\infty.
\end{equation}

In deformed quantum mechanics, the Schrodinger-like equation $\left(
\ref{eq:scheq}\right)  $ could be changed to
\begin{equation}
\left[  -\frac{d^{2}}{dr_{\ast}^{2}}g^{2}\left(  \alpha\frac{d}{idr_{\ast}%
}\right)  +V\left(  r_{\ast}\right)  -\omega^{2}\right]  R\left(  r_{\ast
}\right)  =0. \label{eq:deformedScheq}%
\end{equation}
The quasinormal modes can be estimated by using a simpler potential
$-V_{PT}\left(  x,b\right)  $ that approximates $\left(  \ref{eq:V}\right)  $
closely, especially near its maximum \cite{EX-Ferrari:1984zz}. The quantities
$V_{0}$ and $b$ are given by the height and curvature of the potential
$V\left(  r\right)  $ at its maximum $r_{\ast}=r_{\ast,0}$:%
\begin{equation}
V_{0}=V\left(  r_{\ast,0}\right)  \text{ and }b^{2}=-\frac{1}{2V_{0}}%
\frac{d^{2}V\left(  r_{\ast}\right)  }{dr_{\ast}^{2}}|_{r_{\ast}=r_{\ast,0}%
}\text{.}%
\end{equation}
For a Schwarzschild black hole with $h\left(  r\right)  =1-\frac{2M}{r}$, we
find%
\begin{align}
V_{0}  &  =\frac{4l^{3}\left(  l+1\right)  ^{3}\left(  a_{l}-3-l-l^{2}\right)
\left[  1+3l\left(  l+1\right)  +a_{l}\right]  }{\left[  3l\left(  l+1\right)
-3+a_{l}\right]  ^{4}M^{2}},\label{eq:Vandb}\\
b^{2}  &  =\frac{16l^{2}\left(  l+1\right)  ^{2}\left\{  9\left(
-3+a_{l}\right)  +l\left(  l+1\right)  \left[  -33+4a_{l}+l\left(  l+1\right)
\left(  -13+9l\left(  l+1\right)  +3a_{l}\right)  \right]  \right\}  }{\left[
-3+3l\left(  l+1\right)  +{}a_{l}\right]  ^{4}\left[  1+3l\left(  l+1\right)
+{}a_{l}\right]  M^{2}},
\end{align}
where
\begin{equation}
a_{l}=\sqrt{9+l\left(  l+1\right)  \left(  14+9l\left(  l+1\right)  \right)
}\text{.}%
\end{equation}
Note that the ratio $bV_{0}^{-1/2}$ controlling the WKB expansion is given by%
\begin{equation}
bV_{0}^{-1/2}\sim l^{-1}\text{.}%
\end{equation}
In this approximation, eqn. $\left(  \ref{eq:deformedScheq}\right)  $ becomes%
\begin{equation}
\left[  -\frac{d^{2}}{dr_{\ast}^{2}}g^{2}\left(  \alpha\frac{d}{idr_{\ast}%
}\right)  -V_{PT}\left(  r_{\ast},b\right)  -\omega^{2}\left(  b\right)
\right]  R\left(  r_{\ast},b\right)  =0.
\end{equation}

To relate the quasinormal modes of the above equation to the bound states of
the P\"{o}schl--Teller potential, we consider the formal transformations
\cite{EX-Ferrari:1984zz}%
\begin{equation}
x\rightarrow-ir_{\ast}\text{ and }b\rightarrow ib
\end{equation}
such that $V_{PT}\left(  x,b\right)  =V_{PT}\left(  -ir_{\ast},ib\right)  $.
Let us define%
\begin{align}
\Psi\left(  x,b\right)   &  =R\left(  -ix,ib\right)  ,\nonumber\\
E\left(  b\right)   &  =-\omega^{2}\left(  ib\right)  .
\end{align}
Then $\Psi\left(  x,b\right)  $ satisfies
\begin{equation}
-\partial_{x}^{2}\tilde{g}^{2}\left(  -i\alpha\partial_{x}\right)  \Psi\left(
x,b\right)  +\left[  V_{PT}\left(  x,b\right)  -E\left(  b\right)  \right]
\Psi\left(  x,b\right)  =0,
\end{equation}
where $\tilde{g}\left(  x\right)  =g\left(  -ix\right)  $, and the boundary
conditions for the quasinormal modes are reduced to%
\begin{equation}
\Psi\left(  x,b\right)  \sim\exp\left(  \mp\sqrt{-E\left(  b\right)
}x\right)  \text{, as }x\rightarrow\pm\infty.
\end{equation}
The quasinormal modes in the $g\left(  x\right)  $ case can be found by the
bound states of the P\"{o}schl--Teller potential in the $\tilde{g}\left(
x\right)  $ case
\begin{equation}
\omega^{2}\left(  b\right)  =-E\left(  -ib\right)  .
\end{equation}
For the $g_{\pm}\left(  x\right)  =1\pm x^{2}$ case, it follows from eqn.
$\left(  \ref{eq:PTeigen}\right)  $ the quasinormal modes $\omega\equiv
\omega_{R}+i\omega_{I}$ of a Schwarzschild black hole in the deformed quantum
mechanics can be estimated as
\begin{align}
\left\vert \omega_{R}\right\vert  &  =\sqrt{V_{0}-\frac{b^{2}}{4}}\left\{
1\pm\frac{3\alpha^{2}b^{2}}{2}\left[  \left(  n+\frac{1}{2}\right)
^{2}+\mathcal{O}\left(  l^{-1}\right)  \right]  +\mathcal{O}\left(  \alpha
^{4}b^{4}\right)  \right\}  ,\nonumber\\
\omega_{I}  &  =-b\left(  n+\frac{1}{2}\right)  \left\{  1\pm\alpha^{2}%
b^{2}\left[  \left(  n+\frac{1}{2}\right)  ^{2}+\mathcal{O}\left(
l^{-1}\right)  \right]  +\mathcal{O}\left(  \alpha^{4}b^{4}\right)  \right\}
, \label{eq:QNM}%
\end{align}
where $n=0,1,2,\cdots$ such that the sum of terms in the square bracket of
eqn. $\left(  \ref{eq:PTeigen}\right)  $ is non-negative. If $l\gg1$, it
follows that eqn. $\left(  \ref{eq:QNM}\right)  $ work for $n<l$ when a
corresponding bound state exists. Our WKB method gives quite accurate results
for the regime of high multipole numbers $l$ of a Schwarzschild black hole of
the mass $M\gg1$, since $\alpha b\sim\alpha M^{-1}$. The P\"{o}schl-Teller
approximate potential method gives best result for low overtone number.
However for the higher modes, it is known that the P\"{o}schl-Teller potential
method gives higher values of $\omega_{I}$ \cite{EX-Nollert:1999ji}. In fact,
for $l\gg1\,$, eqn. $\left(  \ref{eq:Vandb}\right)  $ gives that
$b\approx\frac{1}{3\sqrt{3}M}$ in the P\"{o}schl-Teller approximate potential
method. On the other hand, the asymptotic quasinormal mode of a Schwarzschild
black hole is given by%
\begin{equation}
\omega\approx\frac{\ln3}{8\pi M}-\frac{i}{4M}\left(  n+\frac{1}{2}\right)  .
\end{equation}
It appears that if $b=1/4M$, we could have better approximations for
$\omega_{I}$ for the higher modes.

In \cite{EX-Hod:1998vk}, Hod used Bohr's correspondence principle to argue
that the highly damped black-hole oscillations frequencies were transitions
from an unexcited black hole to a black hole in a mode with $n\gg1$. Later,
Maggiore argued that these highly damped black-hole oscillations frequencies
should be interpreted as $\sqrt{\omega_{R}^{2}+\omega_{I}^{2}}$
\cite{EX-Maggiore:2007nq}. In high damping limit $n\gg1$, it is easy to see
that $\left\vert \omega_{R}\right\vert \ll\left\vert \omega_{I}\right\vert $,
and hence $\sqrt{\omega_{R}^{2}+\omega_{I}^{2}}\sim\left\vert \omega
_{I}\right\vert $. First consider the $\alpha=0$ case. It concludes from the
above arguments that the energy absorbed in the $n\rightarrow n-1$\ transition
with $n\gg1$\ is the minimum quantum that can be absorbed by the black hole.
Therefore, one obtains for the minimum quantum that%
\begin{equation}
\Delta M=\left\vert \omega_{I}\right\vert _{n}-\left\vert \omega
_{I}\right\vert _{n-1}=\frac{1}{4M},
\end{equation}
where we use $b=1/4M$. Since for a Schwarzschild black hole the horizon area
$A$ is related to the mass $M$ by $A=16\pi M^{2}$, a change $\Delta M$ in the
black hole mass produces a change
\begin{equation}
\Delta A=32\pi M\Delta M=8\pi,
\end{equation}
which coincides with the Bekenstein result \cite{EX-Bekenstein:1974jk}.

For the $g_{\pm}\left(  x\right)  =1\pm x^{2}$ case, it follows from eqn.
$\left(  \ref{eq:QNM}\right)  $ that for $n\gg1$, the minimum quantum absorbed
by the black hole is%
\begin{equation}
\Delta M=\left\vert \omega_{I}\right\vert _{n}-\left\vert \omega
_{I}\right\vert _{n-1}\approx\frac{1}{4M}\left[  1\pm\frac{3\alpha^{2}%
}{16M^{2}}\left(  n+\frac{1}{2}\right)  ^{2}\right]  ,
\end{equation}
which becomes negative or infinity as $n\rightarrow\infty$, depending on the
sign in front of $\alpha^{2}$. This means that contributions from higher order
terms become important and have to be included for very large value of $n$.
Despite the ignorance of higher order contributions, one may introduce an
upper cutoff $n_{c}$ on $n$, when higher order contributions are important.
Thus, the minimum quantum can be estimated as%
\begin{equation}
\Delta M\sim\frac{1}{4M}\left(  1\pm\frac{3\alpha^{2}n_{c}^{2}}{16M^{2}%
}\right)  ,
\end{equation}
which gives that the area of the horizon is quantized in units%
\begin{equation}
\Delta A=8\pi\left(  1\pm\frac{3\pi\alpha^{2}n_{c}^{2}}{A}\right)  .
\end{equation}
Since the minimum increase of entropy is $\ln2$ which is independent of the
value of the area, one then concludes that%
\begin{equation}
\frac{dS}{dA}\approx\frac{\Delta S}{\Delta A}\approx\frac{1}{4}\left(
1\mp\frac{3\pi\alpha^{2}n_{c}^{2}}{A}\right)  ,
\end{equation}
where a \textquotedblleft calibration factor\textquotedblright\ $\ln2/2\pi$ is
introduced in $\Delta A$ \cite{EX-Chen:2002tu}. From this, it follows that%
\begin{equation}
S\approx\frac{A}{4}\mp3\pi\alpha^{2}n_{c}^{2}\ln A,
\end{equation}
where the logarithmic term is the well known correction from quantum gravity
to the classical Bekenstein-Hawking entropy.

\subsection{Quantum Cosmology}

We now consider the case of a closed Friedmann universe with a scalar field
with the potential $V\left(  \Phi\right)  $. The Einstein-Hilbert action plus
the Gibbons--Hawking--York boundary term is
\begin{equation}
S_{g}=\frac{1}{4\kappa^{2}}\int_{M}d^{4}x\sqrt{-g}R+\frac{\varepsilon}%
{2\kappa^{2}}\int_{\partial M}d^{3}x\sqrt{\left\vert h\right\vert }K,
\end{equation}
where where $\kappa^{2}=4\pi$, $h_{ab}$ is the induced metric on the boundary,
$K$ is the trace of the second fundamental form, $\varepsilon$ is equal to $1$
when $\partial M$ is timelike, and $\varepsilon$ is equal to $-1$ when
$\partial M$ is spacelike. The action for the single scalar field is%
\begin{equation}
S_{m}=\int_{M}d^{4}x\sqrt{-g}\left(  -\frac{1}{2}g^{\mu\nu}\partial_{\mu}%
\Phi\partial_{\nu}\Phi-V\left(  \Phi\right)  \right)  \text{.}%
\end{equation}
The ansatz for the classical line element is%
\begin{equation}
ds^{2}=-N^{2}dt^{2}+a^{2}\left(  t\right)  d\Omega_{3}^{2},
\end{equation}
where $N\left(  t\right)  $ is the lapse function, and%
\begin{equation}
d\Omega_{3}^{2}=d^{2}\chi+\sin^{2}\chi\left(  d^{2}\theta+\sin^{2}\theta
d\phi^{2}\right)
\end{equation}
is the standard line element on $S^{3}$. Thus, one has for the curvature
scalar:%
\begin{equation}
R=\frac{6}{N^{2}}\left(  -\frac{\dot{N}\dot{a}}{Na}+\frac{\ddot{a}}{a}+\left[
\frac{\dot{a}}{a}\right]  ^{2}\right)  +\frac{6}{a^{2}}. \label{eq:R}%
\end{equation}
After partial integration of the second term in the parentheses of eqn.
$\left(  \ref{eq:R}\right)  $, we find that the minisuperspace action becomes%
\begin{equation}
S=S_{g}+S_{m}=\frac{1}{2}\int dtN\left(  -\frac{a\dot{a}^{2}}{N^{2}}+a\right)
+\int dtNa^{3}\left(  \frac{\dot{\Phi}^{2}}{2N^{2}}-\frac{V\left(
\Phi\right)  }{2}\right)  ,
\end{equation}
where we make rescalings
\begin{equation}
a\rightarrow\frac{\kappa}{\sqrt{6}\pi}a\text{, }N\rightarrow\frac{\kappa
}{\sqrt{6}\pi}N\text{, }\Phi\rightarrow\frac{\sqrt{3}}{\kappa}\Phi\text{, and
}V\left(  \Phi\right)  \rightarrow\frac{9\pi^{2}}{\kappa^{4}}V\left(
\Phi\right)  \text{.}%
\end{equation}
With the choice of the gauge $N=a$, the Hamiltonian can be written as%
\begin{equation}
\mathcal{H}=-\frac{\pi_{a}^{2}}{2}+\frac{\pi_{\Phi}^{2}}{2a^{2}}%
+\frac{\mathcal{V}\left(  a,\Phi\right)  }{2},
\end{equation}
where%
\begin{equation}
\mathcal{V}\left(  a,\Phi\right)  =a^{2}\left[  a^{2}V\left(  \Phi\right)
-1\right]  .
\end{equation}

To quantize this model, we could make the following replacements for $\pi_{a}$
and $\pi_{\Phi}$%
\begin{equation}
\pi_{a}^{2}=-\partial_{a}^{2}g^{2}\left(  -i\alpha\partial_{a}\right)  \text{
and }\pi_{\Phi}^{2}=-\partial_{\Phi}^{2}g^{2}\left(  -i\alpha\partial_{\Phi
}\right)  \text{.}%
\end{equation}
Then, the Wheeler--DeWitt equation, $\mathcal{H}\psi\left(  a,\Phi\right)
=0$, reads%
\begin{equation}
\left[  \partial_{a}^{2}g^{2}\left(  -i\alpha\partial_{a}\right)
-\frac{\partial_{\Phi}^{2}g^{2}\left(  -i\alpha\partial_{\Phi}\right)  }%
{a^{2}}+\mathcal{V}\left(  a,\Phi\right)  \right]  \psi\left(  a,\Phi\right)
=0\text{.} \label{eq:WD}%
\end{equation}
Confining ourselves to regions in which the potential can be approximated by a
cosmological constant, we can drop the term involving derivatives with respect
to $\Phi$ in eqn. $\left(  \ref{eq:WD}\right)  $, thereby obtaining a simple
$1$-dimensional problem which is amenable to the WKB analysis. In this case,
eqn. $\left(  \ref{eq:WD}\right)  $ becomes eqn. $\left(  \ref{eq:DeformedEq}%
\right)  $ with $p^{2}\left(  a\right)  =\mathcal{V}\left(  a,\Phi\right)  $.
The function $p^{2}\left(  a\right)  $ is illustrated in FIG. \ref{fig:P:b}.
Thus, we find that the WKB solutions are%
\begin{equation}
\psi_{WKB}\left(  a,\Phi\right)  =C_{1}\left(  \Phi\right)  \psi_{WKB}%
^{1}\left(  a,\Phi\right)  +C_{3}\left(  \Phi\right)  \psi_{WKB}^{3}\left(
a,\Phi\right)  \text{ for }a^{2}V\left(  \Phi\right)  >1
\end{equation}
and
\begin{equation}
\psi_{WKB}\left(  a,\Phi\right)  =C_{0}\left(  \Phi\right)  \psi_{WKB}%
^{0}\left(  a,\Phi\right)  +C_{2}\left(  \Phi\right)  \psi_{WKB}^{2}\left(
a,\Phi\right)  \text{ for }a^{2}V\left(  \Phi\right)  <1,
\end{equation}
where%
\begin{equation}
\psi_{WKB}^{k}\left(  \xi\right)  =\frac{\exp\left[  \int_{V^{-1}\left(
\Phi\right)  }^{a}a^{\prime}\sqrt{\left\vert a^{\prime2}V\left(  \Phi\right)
-1\right\vert }\lambda_{k}\left(  \alpha a^{\prime}\sqrt{\left\vert
a^{\prime2}V\left(  \Phi\right)  -1\right\vert }\right)  da^{\prime}\right]
}{\sqrt{\left\vert \left[  x^{2}g^{2}\left(  \alpha x\right)  \right]
^{\prime}|_{x=-ia\sqrt{\left\vert a^{2}V\left(  \Phi\right)  -1\right\vert
}\lambda_{k}\left(  \alpha a\sqrt{\left\vert a^{2}V\left(  \Phi\right)
-1\right\vert }\right)  }\right\vert }}.
\end{equation}
To specify the WKB solution of the Wheeler--DeWitt equation, we need to make a
choice of boundary condition. The tunneling proposal was proposed by Vilenkin
\cite{EX-Vilenkin:1984wp}, which states that the universe tunnels into
\textquotedblleft existence from nothing.\textquotedblright\ The tunneling
proposal of Vilenkin \cite{EX-Vilenkin:1987kf} is that the wavefunction $\psi$
should be everywhere bounded, and at singular boundaries of superspace $\psi$
includes only outgoing modes. In our case, the boundary $a=\infty$ with $\phi$
finite is the singular boundary. For the WKB solutions, it follows that the
tunneling proposal demands that only the outgoing modes with
\begin{equation}
\psi_{WKB}\left(  a,\Phi\right)  =C\left(  \Phi\right)  \psi_{WKB}^{1}\left(
a,\Phi\right)
\end{equation}
are admitted in the oscillatory region $a^{2}V\left(  \Phi\right)  >1$. Note
that since $\frac{\partial\mathcal{V}\left(  a,\Phi\right)  }{\partial
a}|_{a=V^{-1}\left(  \Phi\right)  }>0$, we have that $\psi_{WKB}^{1}\left(
a,\Phi\right)  $ propagates away from the turning point $a=V^{-1}\left(
\Phi\right)  $, and $\psi_{WKB}^{2}\left(  \xi\right)  $ is exponentially
increasing away from $a=V^{-1}\left(  \Phi\right)  $. The WKB connection
formula $\left(  \ref{eq:wkbC2}\right)  $ gives the wave function $\psi
_{WKB}\left(  a,\Phi\right)  $ in the classically forbidden region
$a^{2}V\left(  \Phi\right)  <1$:%
\begin{equation}
\psi_{WKB}\left(  a,\Phi\right)  =e^{3i\pi/4}C\left(  \Phi\right)  \psi
_{WKB}^{2}\left(  a,\Phi\right)  .
\end{equation}

It appears that eqn. $\left(  \ref{eq:WD}\right)  $ can be described as a
particle of zero energy moving in a potential $\mathcal{V}\left(
a,\Phi\right)  $. The universe can start at $a=0$ and tunnel through the
potential barrier to the oscillatory region. The tunneling probability are
given by eqn. $\left(  \ref{eq:tunneling}\right)  $:%
\begin{equation}
P\left(  \Phi\right)  \sim\exp\left[  -2\int_{0}^{V^{-1/2}\left(  \Phi\right)
}a\sqrt{1-a^{2}V\left(  \Phi\right)  }\lambda_{0}\left(  \alpha a\sqrt
{1-a^{2}V\left(  \Phi\right)  }\right)  da\right]  .
\end{equation}
$P\left(  \Phi\right)  $ can be interpreted as the probability distribution
for the initial values of $\Phi$ in the ensemble of nucleated universes. For a
chaotic potential $V\left(  \Phi\right)  =\lambda\Phi^{2p}$, there will then
be a minimum value of the scalar field, $\Phi_{s}$, for which sufficient
inflation is obtained. The probability of sufficient inflation is given by a
conditional probability:%
\begin{equation}
\mathcal{P}\left(  \Phi>\Phi_{s}|\Phi_{1}<\Phi<\Phi_{2}\right)  =\frac
{\int_{\Phi_{s}}^{\Phi_{2}}P\left(  \Phi\right)  d\Phi}{\int_{\Phi_{1}}%
^{\Phi_{2}}P\left(  \Phi\right)  d\Phi},
\end{equation}
where the initial value of $\Phi$ lies in the range $\Phi_{1}<\Phi<\Phi_{2}$,
and the values $\Phi_{1}$ and $\Phi_{2}$ are respectively lower and upper
cutoffs on the allowed values of $\Phi$.

For $g_{\pm}\left(  x\right)  =1\pm x^{2}$, we find%
\begin{equation}
P^{\pm}\left(  \Phi\right)  \sim\exp\left[  -\frac{2}{3V\left(  \Phi\right)
}\left(  1\pm\frac{6\alpha^{2}}{35V\left(  \Phi\right)  }+\mathcal{O}\left(
\alpha^{4}\right)  \right)  \right]  .
\end{equation}
The probability of sufficient inflation is%
\begin{equation}
\mathcal{P}\left(  \Phi>\Phi_{s}|\Phi_{1}<\Phi<\Phi_{2}\right)  =\frac
{\int_{\Phi_{s}}^{\Phi_{2}}\exp\left[  -\frac{2}{3V\left(  \Phi\right)
}\right]  d\Phi}{\int_{\Phi_{1}}^{\Phi_{2}}\exp\left[  -\frac{2}{3V\left(
\Phi\right)  }\right]  d\Phi}\left\{  1\pm\frac{6\alpha^{2}}{35}\left[
F\left(  \Phi_{1}\right)  -F\left(  \Phi_{s}\right)  \right]  \right\}  ,
\end{equation}
where we define%
\begin{equation}
F\left(  \phi\right)  =\frac{\int_{\phi}^{\Phi_{2}}\exp\left[  -\frac
{2}{3V\left(  \Phi\right)  }\right]  \frac{d\Phi}{V^{2}\left(  \Phi\right)  }%
}{\int_{\phi}^{\Phi_{2}}\exp\left[  -\frac{2}{3V\left(  \Phi\right)  }\right]
d\Phi}.
\end{equation}
Since $F^{\prime}\left(  \phi\right)  <0$, we find that $F\left(  \Phi
_{1}\right)  >F\left(  \Phi_{s}\right)  $, and the probability of sufficient
inflation is higher/lower in the $g_{+}\left(  x\right)  $/$g_{-}\left(
x\right)  $ case than in the usual case.

\section{Conclusion}

\label{Sec:Con}

In the first part of this paper, we used the WKB approximation method to
approximately solve the deformed Schrodinger-like differential equation
$\left(  \ref{eq:DeformedEq}\right)  $ and applied the steepest descent method
to find the exact solutions around turning points. Matching the two sets of
solutions in the overlap regions, we obtained the WKB connection formulas
through a turning point, the deformed Bohr--Sommerfeld quantization rule, and
tunneling rate formula.

In the second part, several examples of applying the WKB approximation to the
deformed quantum mechanics were discussed. In the example of the harmonic
oscillator, we used the WKB approximation to calculate bound states in the
$g\left(  x\right)  =\tan x/x$ case. After compared with the exact solutions,
our WKB ones were shown to agree with the leading term of the WKB expansion of
the exact result.

The pair production rate of electron--positron pairs by a strong electric
field was computed in the case with $g\left(  x\right)  =\tan x/x\approx
1+x^{2}/3$ and found to be%
\begin{equation}
W\sim\exp\left[  -\frac{\pi m^{2}}{e\mathcal{E}}\left(  1+\frac{\alpha
^{2}m^{2}}{4}\right)  \right]  .
\end{equation}
In the GUP case with $g\left(  x\right)  =1+x^{2}/3$, the scalar particles
pair creation rate by an electric field was calculated in the context of the
deformed QFT \cite{CON-Mu:2015pta} and given by%
\begin{equation}
W\sim\exp\left[  -\frac{\pi m^{2}}{e\mathcal{E}}\left(  1+\frac{4\alpha
^{2}e^{2}\mathcal{E}^{2}}{3\pi^{2}m^{2}}\right)  \right]  .
\end{equation}
The pair creation rate was also calculated by using Bogoliubov transformations
\cite{CON-Haouat:2013yba} and given by%
\begin{equation}
W\sim\exp\left[  -\frac{m^{2}\pi}{e\mathcal{E}}\left(  1+\frac{\alpha^{2}%
m^{2}}{4}\left(  1-\frac{e^{2}\mathcal{E}^{2}}{m^{4}}\right)  \right)
\right]  .
\end{equation}
Although the expressions for the quantum gravity correction are different in
the above cases, the effects of the minimal length all tend to lower the pair
creation rates.

Using the Bohr--Sommerfeld quantization rule, we calculated the bound states
of the P\"{o}schl-Teller potential in the $g\left(  x\right)  =1\pm x^{2}$
case. The quasinormal modes of a black hole could be related to the bound
states of the P\"{o}schl-Teller potential by approximating the gravitational
barrier potential of the black hole with the inverted P\"{o}schl-Teller
potential. In this way, the effects of quantum gravity on quasinormal modes of
a Schwarzschild black hole were estimated. Moreover, the effects of quantum
gravity on the area quantum of the black hole was considered via Bohr's
correspondence principle. In the $g\left(  x\right)  =1\pm x^{2}$ case, we
found that the minimum increase of area was%
\begin{equation}
\Delta A=8\pi\left(  1\pm\frac{3\pi\alpha^{2}n_{c}^{2}}{A}\right)  ,
\end{equation}
where $n_{c}$ is some upper cutoff on $n$. On the other hand, authors of
\cite{CON-AmelinoCamelia:2005ik} followed the original Bekenstein argument
\cite{CON-Bekenstein:1973ur} and gave that%
\begin{equation}
\Delta A=8\pi\left(  1\pm\frac{6\pi\alpha^{2}}{A}\right)
\end{equation}
in the MDR scenario in which $g\left(  x\right)  =1\pm x^{2}$, and
\begin{equation}
\Delta A=8\pi\left(  1+\frac{4\pi\alpha^{2}}{A}\right)
\end{equation}
in the GUP scenario in which $g\left(  x\right)  =1+x^{2}$.

Finally, we used the WKB approximation method to find the WKB solutions of the
deformed Wheeler--DeWitt equation for a closed Friedmann universe with a
scalar field. In the context of the tunneling proposal, the effects of quantum
gravity on the probability of sufficient inflation was also discussed.

\begin{acknowledgments}
We are grateful to Houwen Wu and Zheng Sun for useful discussions. This work
is supported in part by NSFC (Grant No. 11375121 11005016 and 11175039).
\end{acknowledgments}

\appendix

\section{Contours in $g(x)=\frac{\tan(x)}{x}$ Case}

\begin{figure}[tb]
\begin{centering}
\includegraphics[scale=0.3]{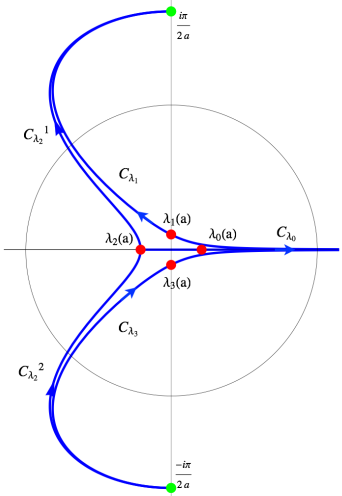}
\par\end{centering}
\caption{Contours (blue lines) and saddle points (red dots) of $\psi_{1}
(\rho)$ and $\psi_{2}(\rho)$ in the $g(x)=\frac{\tan(x)}{x}$ case. The green
dots are poles where the contours end.}%
\label{contourT}%
\end{figure}

In this appendix, we consider the contours $C_{>}$ and $C_{<}$ in the
$g(x)=\frac{\tan(x)}{x}$ case. In this case, we have four regular saddle
points:
\begin{equation}
\lambda_{k}(a)=\frac{\text{arctanh}{\left(  e^{i\pi k/2}a\right)  }}{a}\text{,
for }k=0,1,2,3.
\end{equation}
Note that $f_{\pm}(s)$ both have poles at $s=\pm\frac{i\pi}{2a}$. The
endpoints of a contour are at either infinity or singularities. It turns out
that in the $g(x)=\frac{\tan(x)}{x}$ case, the contours $C_{>}$ and $C_{<}$
both start from and terminate at poles $s=\pm\frac{i\pi}{2a}$. Following the
conventions adopted in section \ref{Sec:WKB}, we plot the contours $C_{>}$ and
$C_{<}$ in FIG. \ref{contourT}, where the green dots are poles. The circle
$C_{R}$ is also plotted in FIG. \ref{contourT}.

\end{document}